\documentclass[11pt]{article}
\usepackage[T1]{fontenc}
\usepackage[margin=1in]{geometry}
\usepackage{graphicx}
\usepackage[longnamesfirst,sort]{natbib}
\usepackage{enumerate}
\usepackage{multirow}
\usepackage{booktabs}
\usepackage{colortbl}
\usepackage{amsmath}
\usepackage{amssymb}
\usepackage{amsthm}
\usepackage{xcolor}
\usepackage[hidelinks]{hyperref}
\hypersetup{
  pdftitle={A Bures-Wasserstein Formulation of Matrix Decomposition Structural Equation Modeling},
  pdfauthor={Naoto Yamashita},
  pdfkeywords={distance metric, consistency, asymptotic normality, covariance structure}
}
\def\diag{\mathrm{diag}}

\def\tr{\mathrm{tr}}

\def\argmin{\mathrm{argmin}}
\def\p{\prime}
\def\bfs{\boldsymbol}
\definecolor{colR1}{HTML}{000000}
\definecolor{colR2}{HTML}{000000} 
\definecolor{colR3}{HTML}{000000} 
\definecolor{colAE}{HTML}{000000} 
\newtheorem{thm}{Theorem}

\newtheorem{col}{Corollary}

\title{A Bures-Wasserstein Formulation of Matrix Decomposition Structural Equation Modeling}
\author{Naoto Yamashita\\
\normalsize Graduate School of Psychology, Kansai University, Japan\\
\normalsize \texttt{nyam@kansai-u.ac.jp}}
\date{\today}

\begin{document}
\maketitle

\begin{abstract}
Matrix decomposition SEM (MDSEM) is a data-matrix-based alternative to conventional covariance-based SEM, but its theoretical relationship to covariance-based SEM and the statistical properties of its estimator have remained unclear. 
We first reformulate MDSEM using a single loss function that integrates the measurement and structural models. 
We then prove that minimizing this loss over the model parameters is equivalent to minimizing the squared Bures–Wasserstein (BW) distance between the observed and model-implied covariance matrices, and use this equivalence to establish the estimator’s theoretical properties.
Specifically, the equivalence identifies the proposed estimator as both an MDSEM estimator and a covariance-based SEM estimator within the minimum discrepancy estimation framework, from which consistency, asymptotic normality, and standard errors are derived. 
Simulations show that its finite-sample performance is comparable to that of conventional estimators, including maximum likelihood, and that confidence-interval coverage is close to the nominal level. 
The estimator is also more numerically stable than conventional SEM estimators in small samples and under model misspecification. 
Thus, the BW formulation provides a theoretical foundation for MDSEM and a practically useful discrepancy function for covariance-based SEM.

\end{abstract}

\noindent\textbf{Keywords:}
distance metric; consistency; asymptotic normality; covariance structure

\section{Introduction}
\label{sec:introduction}

Structural equation modeling (SEM) provides a flexible framework for modeling relationships among multiple observed and latent variables \citep{bentler1980multivariate,bentler1986structural,Bollen1989}.
The conventional formulation of SEM is covariance based: a model-implied covariance matrix ${\bf \Sigma}({\bfs\theta})$, parameterized by a vector of free parameters ${\bfs\theta}$, is fitted to the sample covariance matrix ${\bf S}_{\bf X} = n^{-1}{\bf X}^{\p}{\bf X}$, where ${\bf X}\ (n\times  p)$ is a column-centered data matrix of $n$ observations and $p$ variables.
The model parameters are estimated by minimizing a discrepancy function $d({\bf S}_{\bf X},{\bf \Sigma}({\bfs\theta}))$ with respect to ${\bfs\theta}$ \citep{kline2023}.

In contrast to covariance-based SEM, \citet{Yamashita2024} proposed matrix decomposition structural equation modeling (MDSEM), which fits an SEM model directly to the data matrix.
For ${\bf X}$, the original MDSEM criterion is
\begin{equation}
    L_{MD} =
    \lVert{\bf X} - ({\bf F}{\bf \Lambda}^{\p}
    + {\bf U}{\bf \Psi}^{1/2})\rVert^2
    + \alpha
    \lVert{\bf F} - {\bf FB}\rVert^2 .
    \label{eq:mdsem_loss}
\end{equation}
Here, ${\bf F}\ (n\times r)$ and ${\bf U}\ (n\times p)$ are score matrices of $r\ (<p)$ common and $p$ unique factors, respectively, ${\bf \Lambda}\ (p \times r)$ is a factor loading matrix, ${\bf B}\ (r \times r)$ is a path coefficient matrix, and ${\bf \Psi}^{1/2}\ (p\times p)$ is a diagonal matrix containing the square roots of the uniquenesses.
The factor score matrices are constrained as
\begin{equation}
    {\bf 1}_n^{\p}{\bf F} = {\bf 0}_r,\ 
    {\bf 1}_n^{\p}{\bf U} = {\bf 0}_p,\ 
    \diag(n^{-1}{\bf F}^{\p}{\bf F}) = {\bf I}_r,\
    n^{-1}{\bf U}^{\p}{\bf U} = {\bf I}_p,\
    {\bf F}^{\p}{\bf U} = {\bf O}_{r\times p}.
    \label{eq:mdsem_const}
\end{equation}
The first term in (\ref{eq:mdsem_loss}) measures the reconstruction error in the measurement equation, whereas the second measures the residual error in the structural relations among the latent variables.
The original MDSEM criterion combines these two losses through the tuning parameter $\alpha>0$.
Thus, MDSEM approximates the data matrix by an SEM model, whereas covariance-based SEM approximates the sample covariance matrix by a model-implied covariance matrix.

Despite their common objective of estimating SEM parameters, the theoretical relationship between MDSEM and covariance-based SEM has not been clearly established.
In particular, it remains unclear how an MDSEM solution can be interpreted within the conventional covariance-structure framework.
This difficulty stems from the apparent conceptual gap between their formulations: covariance-based SEM minimizes a discrepancy between covariance matrices, whereas MDSEM minimizes the matrix-decomposition criterion in (\ref{eq:mdsem_loss}).
Consequently, the validity of MDSEM has been supported primarily by parameter-recovery studies and empirical comparisons of its solutions with those obtained by covariance-based SEM, as reported in \citet{Yamashita2024}, rather than by an explicit covariance-discrepancy representation.
Moreover, asymptotic properties of the MDSEM estimator have not been established, although such properties have been studied extensively for covariance-based SEM; see \citet{joreskog1969, joreskog1970} for their classical examples.

Against this background, the central question of this study is whether the formulation of MDSEM admits an exact representation as a covariance discrepancy problem.
The original MDSEM criterion combines the residual losses from the measurement and structural equations through the tuning parameter $\alpha>0$ and therefore does not immediately yield a single covariance discrepancy corresponding to the entire SEM model.
We return to the measurement and structural equations underlying MDSEM and combine them into a single data-reconstruction equation.
This construction naturally yields an $\alpha$-free matrix-decomposition criterion.

We then show that the concentrated form of this $\alpha$-free criterion is exactly equal to the squared Bures-Wasserstein (BW) distance between the sample covariance matrix and the model-implied covariance matrix \citep{BhatiaJainLim2019}.
The BW distance thereby provides an explicit theoretical bridge between the data-matrix formulation of MDSEM and the covariance-structure formulation of conventional SEM.

This equivalence has several important consequences.
First, it places the MDSEM estimator within the minimum discrepancy function framework \citep{shapiro1983, shapiro1984, shapiro1985a, shapiro1985b}, from which its consistency and asymptotic normality can be established.
The resulting asymptotic covariance matrix also enables standard errors of the parameter estimates to be calculated.
Second, the model parameters can be estimated by directly minimizing the squared BW distance, providing an alternative to the alternating least-squares algorithm used for the original MDSEM.
Numerical studies and an empirical application evaluate the finite-sample performance and numerical stability of the proposed estimator.

The remainder of this article is organized as follows.
The next section presents the new formulation of MDSEM, establishes its connection to the squared BW distance, derives its asymptotic properties, and describes the parameter estimation and standard error calculations.
The subsequent sections report the numerical studies and empirical application.
The final section discusses the implications, limitations, and future directions of this study.

\section{Proposed Method and Its Asymptotic Properties}
\label{sec:proposed_method}

This section derives the covariance discrepancy corresponding to the matrix-decomposition formulation of MDSEM.
We first construct an integrated matrix-decomposition criterion from the measurement and structural equations and then show that its concentrated form is the squared BW distance.

\subsection{Derivation of an Integrated Matrix-Decomposition Criterion}
\label{subsec:formulation_of_mdsem}

The two terms in (\ref{eq:mdsem_loss}) correspond to residual losses from the measurement and structural equations.
Introducing residual matrices ${\bf E}_x\ (n\times p)$ and ${\bf E}_f\ (n\times r)$, these equations can be written as
\begin{equation}
    {\bf X} = {\bf F}{\bf \Lambda}^{\p}
    + {\bf U}{\bf \Psi}^{1/2} + {\bf E}_x
    \label{eq:mdsem_measurement_eq}
\end{equation}
and
\begin{equation}
    {\bf F} = {\bf FB} + {\bf E}_f .
    \label{eq:mdsem_structural_eq}
\end{equation}
The zero and free patterns in ${\bf \Lambda}$ and ${\bf B}$ encode the measurement and structural relations.
Thus, the original MDSEM objective can be viewed as minimizing
\begin{equation}
    \lVert{\bf E}_x\rVert^2+\alpha\lVert{\bf E}_f\rVert^2,
\end{equation}
that is, a weighted sum of the squared residuals in the two equations.

Because the diagonal elements of ${\bf B}$ are usually fixed to zero to avoid self-regression, and because ${\bf I}_r-{\bf B}$ is assumed to be nonsingular, (\ref{eq:mdsem_structural_eq}) can be rewritten as
\begin{equation}
    {\bf F} = {\bf E}_f({\bf I}_r-{\bf B})^{-1}.
    \label{eq:mdsem_structural_form2}
\end{equation}
Substituting this expression into (\ref{eq:mdsem_measurement_eq}) gives
\begin{equation}
    {\bf X}
    =
    {\bf E}_f({\bf I}_r-{\bf B})^{-1}{\bf \Lambda}^{\p}
    +{\bf U}{\bf \Psi}^{1/2}
    +{\bf E}_x .
    \label{eq:mdsem_measurement_form2}
\end{equation}
Thus, instead of minimizing a weighted sum of $\lVert{\bf E}_x\rVert^2$ and $\lVert{\bf E}_f\rVert^2$, we can define an $\alpha$-free reconstruction loss from (\ref{eq:mdsem_measurement_form2}).
To express this criterion in a standard matrix-decomposition form, write ${\bf E}_f=\tilde{\bf E}_f{\bf T}$, where $n^{-1}\tilde{\bf E}_f^{\p}\tilde{\bf E}_f={\bf I}_r$ and ${\bf T}$ is a nonsingular matrix satisfying $\diag({\bf T}^{\p}{\bf T})={\bf I}_r$.
The resulting loss is
\begin{eqnarray}
    L_n({\bf Z},{\bf \Xi})
    &=&
    \left\lVert
    {\bf X}
    -
    \left\{
    \tilde{\bf E}_f{\bf T}({\bf I}_r-{\bf B})^{-1}{\bf \Lambda}^{\p}
    +
    {\bf U}{\bf \Psi}^{1/2}
    \right\}
    \right\rVert^2
    =
    \left\lVert
    {\bf X}
    -
    {\bf Z}{\bf \Xi}^{\p}
    \right\rVert^2 ,
    \label{eq:mdsem_alpha_free_loss}
\end{eqnarray}
where ${\bf Z}=[\tilde{\bf E}_f,{\bf U}]$.
As in (\ref{eq:mdsem_const}), the component score matrices collected in ${\bf Z}$ are centered, standardized, and mutually orthogonal, so that ${\bf 1}_n^{\p}{\bf Z}={\bf 0}_{r+p}^{\p}$ and $n^{-1}{\bf Z}^{\p}{\bf Z}={\bf I}_{r+p}$.
The block matrix ${\bf \Xi}$ is
\begin{equation}
    {\bf \Xi}
    =
    \left[
    {\bf \Lambda}({\bf I}_r-{\bf B})^{-1\p}{\bf T}^{\p}
    \quad
    {\bf \Psi}^{1/2}
    \right].
    \label{eq:mdsem_block_matrix}
\end{equation}
Because ${\bf E}_f=\tilde{\bf E}_f{\bf T}$ and $n^{-1}\tilde{\bf E}_f^{\p}\tilde{\bf E}_f={\bf I}_r$, the covariance matrix of the structural disturbances is
$
    n^{-1}{\bf E}_f^{\p}{\bf E}_f
    =
    {\bf T}^{\p}{\bf T}.
$
We denote this matrix by ${\bf \Omega}$, so that ${\bf \Omega}={\bf T}^{\p}{\bf T}$.
This loss without $\alpha$ provides the matrix-decomposition representation whose concentrated form is examined in the next subsection.

The same representation also gives the factor covariance matrix.
From (\ref{eq:mdsem_structural_form2}),
\begin{eqnarray}
    {\bf \Phi}
    &=&
    n^{-1}{\bf F}^{\p}{\bf F} \nonumber\\
    &=&
    ({\bf I}_r-{\bf B})^{-1\p}
    {\bf T}^{\p}
    (n^{-1}\tilde{\bf E}_f^{\p}\tilde{\bf E}_f)
    {\bf T}
    ({\bf I}_r-{\bf B})^{-1} \nonumber\\
    &=&
    ({\bf I}_r-{\bf B})^{-1\p}
    {\bf \Omega}
    ({\bf I}_r-{\bf B})^{-1}.
    \label{eq:factor_correlation}
\end{eqnarray}
The loss in (\ref{eq:mdsem_alpha_free_loss}) is a matrix decomposition factor analysis (MDFA; \citeauthor{AdachiTrendafilov2018a}, \citeyear{AdachiTrendafilov2018a})-type reconstruction loss in which the loading block is structurally restricted by ${\bf \Lambda}$, ${\bf B}$, and ${\bf T}$.

Although not numerically equivalent to (\ref{eq:mdsem_loss}), the proposed criterion in (\ref{eq:mdsem_alpha_free_loss}) retains the defining idea of MDSEM by integrating the measurement and structural equations underlying the two terms of (\ref{eq:mdsem_loss}) into a single data-matrix decomposition.

\subsection{Connection to the Squared Bures-Wasserstein Distance}
\label{subsec:bw_connection}

The preceding formulation expresses MDSEM through an MDFA-type reconstruction loss.
\citet{terada2025} showed that, for MDFA, the concentrated matrix-decomposition loss is equivalent to the squared BW distance between the sample covariance matrix and the model-implied covariance matrix.
Specifically, after minimizing the MDFA reconstruction loss over the factor score matrix, the resulting concentrated loss can be written as a squared BW discrepancy.
Applying this result to (\ref{eq:mdsem_alpha_free_loss}) yields the following corollary.

\begin{col}
Let ${\bfs \theta}$ collect the free parameters in
${\bf \Lambda}$, ${\bf B}$, ${\bf \Omega}$, and ${\bf \Psi}$ in a suitable order.
For any matrix ${\bf T}$ such that ${\bf T}^{\p}{\bf T}={\bf \Omega}$, let ${\bf\Xi}({\bfs\theta})$ be defined as in (\ref{eq:mdsem_block_matrix}).
Then the concentrated loss
\begin{equation}
    L_n({\bfs \theta})
    =
    \min_{{\bf Z}}
    L_n({\bf Z},{\bf \Xi}({\bfs \theta}))
\end{equation}
satisfies
\begin{equation}
    L_n({\bfs \theta})
    =
    d^2_{BW}
    \left(
    {\bf S}_{\bf X},
    {\bf \Sigma}({\bfs \theta})
    \right),
    \label{eq:mdsem_bw_equivalence}
\end{equation}
where
\begin{eqnarray}
    {\bf \Sigma}({\bfs \theta})
    =
    {\bf \Xi}({\bfs \theta}){\bf \Xi}({\bfs \theta})^{\p} 
    =
    {\bf \Lambda}({\bf I}_r-{\bf B})^{-1\p}
    {\bf \Omega}
    ({\bf I}_r-{\bf B})^{-1}
    {\bf \Lambda}^{\p}
    +
    {\bf \Psi}
    \label{eq:mdsem_implied_covariance}
\end{eqnarray}
and $d_{BW}^2({\bf P}, {\bf Q})$ is the squared BW distance between two positive definite matrices ${\bf P}$ and ${\bf Q}$ defined as
\begin{equation}
    d_{BW}^2({\bf P}, {\bf Q})
    =
    \tr({\bf P}) + \tr({\bf Q})
    -2
    \tr
    \left\{
    (
    {\bf P}^{1/2}{\bf Q}{\bf P}^{1/2}
    )^{1/2}
    \right\}.
\end{equation}
\end{col}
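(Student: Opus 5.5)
The corollary follows from the MDFA result of \citet{terada2025}, but the argument is short enough to reconstruct directly. I would first minimize over ${\bf Z}$, a Procrustes-type problem, and then identify the resulting trace term with the BW trace term.

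\textbf{Reducing to a trace maximization.} Fix ${\bfs\theta}$ and choose any ${\bf T}$ with ${\bf T}^{\p}{\bf T}={\bf\Omega}$. The matrix ${\bf\Xi}={\bf\Xi}({\bfs\theta})$ is then $p\times(r+p)$, and expanding gives
\[
\lVert{\bf X}-{\bf Z}{\bf\Xi}^{\p}\rVert^2=n\,\tr({\bf S}_{\bf X})+n\,\tr({\bf\Xi}{\bf\Xi}^{\p})-2\,\tr({\bf X}^{\p}{\bf Z}{\bf\Xi}^{\p}),
\]
using $n^{-1}{\bf Z}^{\p}{\bf Z}={\bf I}_{r+p}$. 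Since ${\bf\Xi}{\bf\Xi}^{\p}={\bf\Sigma}({\bfs\theta})$ by direct multiplication of the blocks in (\ref{eq:mdsem_block_matrix}), the first two terms do not involve ${\bf Z}$. Minimization therefore reduces to maximizing $\tr({\bf Z}^{\p}{\bf X}{\bf\Xi})$ over centered ${\bf Z}$ with $n^{-1}{\bf Z}^{\p}{\bf Z}={\bf I}$. The loss is presumably normalized by $n^{-1}$, or the identity holds up to the factor $n$; I would state the scaling convention explicitly.

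\textbf{Solving the maximization.} Take an SVD $n^{-1/2}{\bf X}{\bf\Xi}={\bf K}{\bf D}{\bf L}^{\p}$. By von Neumann's trace inequality, the maximum of $\tr({\bf Z}^{\p}{\bf X}{\bf\Xi})$ equals $n^{1/2}\cdot n^{1/2}\,\tr({\bf D})=n\,\tr({\bf D})$. It is attained at ${\bf Z}=n^{1/2}{\bf K}{\bf L}^{\p}$, possibly after completing ${\bf K}$ with orthonormal centered columns. This completion is possible provided $n\ge r+p+1$. Centering is automatic because the columns of ${\bf X}$ are centered, and the completed columns can be chosen orthogonal to ${\bf 1}_n$ when $n-1\ge r+p$.

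\textbf{Identifying the nuclear norm with the BW term.} The value $\tr({\bf D})$ is the nuclear norm of $n^{-1/2}{\bf X}{\bf\Xi}$, namely
\[
\tr\{(n^{-1}{\bf\Xi}^{\p}{\bf X}^{\p}{\bf X}{\bf\Xi})^{1/2}\}=\tr\{({\bf\Xi}^{\p}{\bf S}_{\bf X}{\bf\Xi})^{1/2}\}.
\]
I need this to equal $\tr\{({\bf S}_{\bf X}^{1/2}{\bf\Sigma}{\bf S}_{\bf X}^{1/2})^{1/2}\}$. Write ${\bf M}={\bf S}_{\bf X}^{1/2}{\bf\Xi}$. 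Then ${\bf M}^{\p}{\bf M}={\bf\Xi}^{\p}{\bf S}_{\bf X}{\bf\Xi}$ and ${\bf M}{\bf M}^{\p}={\bf S}_{\bf X}^{1/2}{\bf\Xi}{\bf\Xi}^{\p}{\bf S}_{\bf X}^{1/2}$. These two matrices share their nonzero eigenvalues, so the traces of their square roots agree. Combining the pieces gives
\[
L_n({\bfs\theta})=\tr({\bf S}_{\bf X})+\tr({\bf\Sigma}({\bfs\theta}))-2\,\tr\{({\bf S}_{\bf X}^{1/2}{\bf\Sigma}({\bfs\theta}){\bf S}_{\bf X}^{1/2})^{1/2}\},
\]
which is $d^2_{BW}$, up to the scaling convention noted in the first step.

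\textbf{Independence of the choice of ${\bf T}$.} Finally, any other factorization has the form ${\bf T}'={\bf QT}$ with ${\bf Q}$ orthogonal. This changes ${\bf\Xi}$ to ${\bf\Xi}\,\diag({\bf Q}^{\p},{\bf I}_p)$, which leaves both ${\bf\Sigma}$ and the optimal value unchanged. Hence the concentrated loss is a well-defined function of ${\bfs\theta}$.

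\textbf{Main obstacle.} I expect the main difficulty to be the feasibility step in the second paragraph: constructing a centered orthonormal ${\bf Z}$ of width $r+p$ that attains the von Neumann bound. This requires $n>r+p$ and a careful treatment of the completion when ${\bf X}$ is rank-deficient. Everything else is routine algebra.
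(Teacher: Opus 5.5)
Your proof is correct and follows the paper's approach. The paper's proof is a one-line appeal to Proposition 3.1 of \citet{terada2025}, which is the standard MDFA argument you reconstruct. That argument expands the loss, maximizes $\tr({\bf Z}^{\p}{\bf X}{\bf \Xi})$ through the SVD of ${\bf X}{\bf \Xi}$, and identifies the resulting nuclear norm with $\tr\{({\bf S}_{\bf X}^{1/2}{\bf \Xi}{\bf \Xi}^{\p}{\bf S}_{\bf X}^{1/2})^{1/2}\}$.

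Writing it out yourself exposes two points that the citation hides:
\begin{enumerate}
\item With the unnormalized Frobenius loss exactly as the paper writes it, the minimum over ${\bf Z}$ is $n\,d^2_{BW}({\bf S}_{\bf X},{\bf \Sigma}({\bfs\theta}))$. The displayed identity therefore holds literally only for $n^{-1}L_n$. This does not change the argmin, but it matters later, where the consistency proof needs $L_n$ to converge uniformly to $L$.
\item The minimization over ${\bf Z}$ is feasible only when $n-1\ge r+p$.
\end{enumerate}

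The feasibility step you flag as the ``main obstacle'' is not really one. Because ${\bf \Xi}$ is $p\times(r+p)$, ${\bf X}{\bf \Xi}$ has rank at most $p<r+p$, so you always have to complete ${\bf K}$. Once $n-1\ge r+p$, any orthonormal completion inside ${\bf 1}_n^{\perp}$ works, since that subspace already contains the range of ${\bf X}{\bf \Xi}$. Rank deficiency of ${\bf X}$ therefore adds nothing.

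Your independence-of-${\bf T}$ step is valid but unnecessary. The optimal value you derive depends on ${\bf \Xi}$ only through ${\bf \Xi}{\bf \Xi}^{\p}={\bf \Sigma}({\bfs\theta})$.
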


\begin{proof}
Because $n^{-1}{\bf Z}^{\p}{\bf Z}={\bf I}_{r+p}$, Proposition 3.1 of \citet{terada2025} applies to the block matrix ${\bf \Xi}({\bfs \theta})$ in (\ref{eq:mdsem_alpha_free_loss}), yielding the stated result.\qed
\end{proof}

In SEM, substantive interest lies in the model parameters rather than in the auxiliary score matrix ${\bf Z}$.
Because joint and concentrated minimization yield the same parameter estimates, ${\bf Z}$ can be concentrated out without altering the estimation problem.

Moreover, because $n^{-1}{\bf Z}^{\p}{\bf Z}={\bf I}_{r+p}$, the covariance matrix of the reconstructed data matrix ${\bf Z}{\bf \Xi}^{\p}$ is
$
    n^{-1}
    ({\bf Z}{\bf \Xi}^{\p})^{\p}
    ({\bf Z}{\bf \Xi}^{\p})
    =
    {\bf \Xi}{\bf \Xi}^{\p}
    =
    {\bf \Sigma}({\bfs\theta}).
$
Thus, the corollary shows that the concentrated MDSEM loss can be interpreted as the squared BW distance between the sample covariance matrix and the model-implied covariance structure.

Because this covariance structure, and hence the concentrated loss, depends on ${\bf T}$ only through ${\bf T}^{\p}{\bf T}={\bf \Omega}$, we treat ${\bf \Omega}$, rather than ${\bf T}$ itself, as the structural-disturbance parameter.
Accordingly, the substantive model parameters are ${\bf \Lambda}$, ${\bf B}$, ${\bf \Omega}$, and ${\bf \Psi}$, whereas ${\bf T}$ is an auxiliary matrix used only in the matrix-decomposition representation.

Therefore, the proposed MDSEM estimator can be written as
\begin{equation}
    \hat{\bfs\theta}_n
    =
    \argmin_{{\bfs\theta}\in\mathcal{H}}
    d^2_{BW}
    \left(
    {\bf S}_{\bf X},
    {\bf \Sigma}({\bfs\theta})
    \right),
    \label{eq:mdsem_bw_estimator}
\end{equation}
where $\mathcal{H}$ denotes the parameter space.
Thus, the proposed MDSEM estimator is a minimum discrepancy function estimator with the squared BW distance as its discrepancy function.
This representation provides the basis for the asymptotic theory and standard error calculation developed below.

\subsection{Asymptotic Properties of the MDSEM Estimator}
\label{subsec:asymptotic_properties}

We next establish the asymptotic properties of the proposed MDSEM estimator by treating it as a minimum discrepancy function estimator.
Let ${\bfs\theta}_0\in\mathcal{H}$ denote a true parameter vector and let
\begin{equation}
    {\bf \Sigma}_0
    =
    {\bf \Sigma}({\bfs\theta}_0)
\end{equation}
be the corresponding population covariance matrix.
The population counterpart of the sample criterion in (\ref{eq:mdsem_bw_estimator}) is
\begin{equation}
    L({\bfs\theta})
    =
    d_{BW}^2
    \left(
    {\bf \Sigma}_0,
    {\bf \Sigma}({\bfs\theta})
    \right).
    \label{eq:mdsem_population_criterion}
\end{equation}
Because the BW distance is a metric \citep[see, e.g.,][]{BhatiaJainLim2019},
\begin{equation}
    L({\bfs\theta})=0
    \quad\Longleftrightarrow\quad
    {\bf\Sigma}({\bfs\theta})={\bf\Sigma}_0.
\end{equation}
Accordingly, define the set of population minimizers by
\begin{equation}
    \mathcal{H}_0
    =
    \left\{
    {\bfs\theta}\in\mathcal{H}:
    {\bf \Sigma}({\bfs\theta})={\bf \Sigma}_0
    \right\}.
    \label{eq:mdsem_true_parameter_set}
\end{equation}
If the SEM model is globally identifiable at ${\bfs\theta}_0$, then $\mathcal{H}_0=\{{\bfs\theta}_0\}$, a singleton.
Thus, the distinction between convergence to the set $\mathcal{H}_0$ and convergence to the unique parameter vector ${\bfs\theta}_0$ depends on whether global identifiability is imposed.

Once the proposed MDSEM estimator is represented as a minimum discrepancy function estimator, its consistency follows from the standard asymptotic theory of minimum discrepancy estimation \citep{shapiro1983}; see also Theorem 3.4 of \citet{terada2025} for the corresponding result for MDFA.
The following theorem records this implication for the proposed MDSEM estimator.

\begin{thm}[Consistency of the proposed MDSEM estimator]
Assume that
\begin{enumerate}
    \item ${\bf x}_1,\ldots,{\bf x}_n$ are independent and identically
    distributed with covariance matrix ${\bf \Sigma}_0$ and finite
    second moments;
    \item the parameter space $\mathcal{H}$ is compact; and
    \item ${\bf \Sigma}({\bfs\theta})$ is continuous in
    ${\bfs\theta}\in\mathcal{H}$.
\end{enumerate}
Then the MDSEM estimator defined in
(\ref{eq:mdsem_bw_estimator}) satisfies
\begin{equation}
    L(\hat{\bfs\theta}_n)
    \xrightarrow{\mathrm{a.s.}} 0
\end{equation}
and
\begin{equation}
    \inf_{{\bfs\theta}\in\mathcal{H}_0}
    \left\lVert
    \hat{\bfs\theta}_n-{\bfs\theta}
    \right\rVert
    \xrightarrow{\mathrm{a.s.}} 0.
    \label{eq:mdsem_set_consistency}
\end{equation}
\end{thm}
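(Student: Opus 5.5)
The plan is the standard minimum-discrepancy argument of \citet{shapiro1983}. It rests on two ingredients: strong consistency of ${\bf S}_{\bf X}$, and uniform continuity of the discrepancy on a compact set.

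\textbf{Convergence of ${\bf S}_{\bf X}$.} By assumption 1 and the strong law of large numbers, $n^{-1}\sum_i {\bf x}_i{\bf x}_i^{\p}$ and the sample mean converge almost surely. Finite second moments are exactly what this needs. Hence ${\bf S}_{\bf X}\to{\bf \Sigma}_0$ almost surely, and we fix an outcome in this probability-one event for the rest of the argument.

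\textbf{Continuity and uniform convergence.} Define $F({\bf P},{\bf Q})=d_{BW}^2({\bf P},{\bf Q})$ on pairs of positive semidefinite matrices. It is jointly continuous, since the trace, the product, and the matrix square root ${\bf A}\mapsto{\bf A}^{1/2}$ are all continuous on the PSD cone. Along the fixed outcome, ${\bf S}_{\bf X}$ eventually lies in a compact neighbourhood $K$ of ${\bf \Sigma}_0$. The set $\{{\bf \Sigma}({\bfs\theta}):{\bfs\theta}\in\mathcal{H}\}$ is compact by assumptions 2–3. So $F$ is uniformly continuous on $K\times{\bf \Sigma}(\mathcal{H})$, which gives
\[
\sup_{{\bfs\theta}\in\mathcal{H}}\left| d_{BW}^2({\bf S}_{\bf X},{\bf \Sigma}({\bfs\theta}))-L({\bfs\theta})\right|\to 0 .
\]
A cleaner alternative avoids uniform continuity altogether. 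Because $d_{BW}$ is a metric, the triangle inequality gives $|d_{BW}({\bf S}_{\bf X},{\bf \Sigma}({\bfs\theta}))-d_{BW}({\bf \Sigma}_0,{\bf \Sigma}({\bfs\theta}))|\le d_{BW}({\bf S}_{\bf X},{\bf \Sigma}_0)$ uniformly in ${\bfs\theta}$. Here $d_{BW}({\bf S}_{\bf X},{\bf \Sigma}_0)\to0$ by continuity, and the distances themselves are uniformly bounded on the compact set. This is the route I would take. One technicality remains: ${\bf S}_{\bf X}$ may be singular for small $n$. The BW formula extends continuously to PSD matrices, so this causes no problem.

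\textbf{First claim, $L(\hat{\bfs\theta}_n)\to 0$.} Existence of the minimizer follows from compactness and continuity. Since $\hat{\bfs\theta}_n$ minimizes the sample criterion,
\[
L(\hat{\bfs\theta}_n)\le d_{BW}^2({\bf S}_{\bf X},{\bf \Sigma}(\hat{\bfs\theta}_n))+\delta_n\le d_{BW}^2({\bf S}_{\bf X},{\bf \Sigma}_0)+\delta_n ,
\]
where $\delta_n$ is the uniform deviation above. The second inequality uses ${\bfs\theta}_0\in\mathcal{H}$. Both terms on the right tend to zero.

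\textbf{Second claim, set consistency.} I would argue by contradiction. Suppose that along a subsequence $\inf_{{\bfs\theta}\in\mathcal{H}_0}\lVert\hat{\bfs\theta}_n-{\bfs\theta}\rVert\ge\varepsilon$. By compactness, a further subsequence converges to some ${\bfs\theta}^*\in\mathcal{H}$. By continuity of $L$, $L({\bfs\theta}^*)=0$, so ${\bf \Sigma}({\bfs\theta}^*)={\bf \Sigma}_0$ because $d_{BW}$ is a metric; that is, ${\bfs\theta}^*\in\mathcal{H}_0$. This contradicts the $\varepsilon$-separation, since $\mathcal{H}_0$ is closed and the distance function is continuous.

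The main obstacle is the uniform convergence step. It needs care about the continuity of the matrix square root up to the boundary of the PSD cone, or equivalently the justification that the triangle-inequality bound holds for possibly singular ${\bf S}_{\bf X}$. I would handle this by citing the PSD extension of the BW metric in \citet{BhatiaJainLim2019}.
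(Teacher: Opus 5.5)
Your proposal is correct and follows essentially the same route as the paper's proof. Both arguments use the strong law for ${\bf S}_{\bf X}$, then uniform convergence of $d_{BW}^2({\bf S}_{\bf X},{\bf\Sigma}({\bfs\theta}))$ to $L({\bfs\theta})$ over the compact $\mathcal{H}$ via continuity, then the bound $L_n(\hat{\bfs\theta}_n)\le L_n({\bfs\theta}_0)$ to get $L(\hat{\bfs\theta}_n)\to 0$, and finally the standard compactness/subsequence argument for set consistency, which you spell out more explicitly than the paper does. Your triangle-inequality shortcut for the uniform step (via $|a^2-b^2|=|a-b|(a+b)$ with $a+b$ uniformly bounded) and your treatment of possibly singular ${\bf S}_{\bf X}$ through the PSD extension of the BW metric are valid minor refinements.
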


\begin{proof}
By the strong law of large numbers,
\begin{equation}
    {\bf S}_{\bf X}
    \xrightarrow{\mathrm{a.s.}}
    {\bf \Sigma}_0.
\end{equation}
The continuity of the squared BW distance and of
${\bf \Sigma}({\bfs\theta})$, together with the compactness of
$\mathcal{H}$, implies
\begin{equation}
    \sup_{{\bfs\theta}\in\mathcal{H}}
    \left|
    L_n({\bfs\theta})-L({\bfs\theta})
    \right|
    \xrightarrow{\mathrm{a.s.}} 0.
\end{equation}
Because $L({\bfs\theta})\geq 0$ and its set of minimizers is
$\mathcal{H}_0$, the standard consistency argument for minimum
discrepancy function estimators gives
(\ref{eq:mdsem_set_consistency}).
Moreover, the uniform convergence and
$L_n(\hat{\bfs\theta}_n)\leq L_n({\bfs\theta}_0)$ imply
$L(\hat{\bfs\theta}_n)\to 0$ almost surely.\qed
\end{proof}

\begin{col}
Under the conditions of the preceding theorem, if the SEM model is
globally identifiable at ${\bfs\theta}_0$, then
\begin{equation}
    \hat{\bfs\theta}_n
    \xrightarrow{\mathrm{a.s.}}
    {\bfs\theta}_0.
    \label{eq:mdsem_parameter_consistency}
\end{equation}
\end{col}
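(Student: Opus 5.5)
The corollary follows directly from the set-consistency statement (\ref{eq:mdsem_set_consistency}) in the preceding theorem. Global identifiability at ${\bfs\theta}_0$ means that ${\bf \Sigma}({\bfs\theta})={\bf \Sigma}_0$ with ${\bfs\theta}\in\mathcal{H}$ implies ${\bfs\theta}={\bfs\theta}_0$. By the definition in (\ref{eq:mdsem_true_parameter_set}), this is exactly the statement $\mathcal{H}_0=\{{\bfs\theta}_0\}$. The first step is to record this, noting that ${\bfs\theta}_0\in\mathcal{H}_0$ trivially, so the set is nonempty and equals the singleton.

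The second step substitutes $\mathcal{H}_0=\{{\bfs\theta}_0\}$ into (\ref{eq:mdsem_set_consistency}). The infimum over a one-point set is just the value at that point, so
\begin{equation*}
\inf_{{\bfs\theta}\in\mathcal{H}_0}\lVert\hat{\bfs\theta}_n-{\bfs\theta}\rVert=\lVert\hat{\bfs\theta}_n-{\bfs\theta}_0\rVert .
\end{equation*}
The theorem then gives $\lVert\hat{\bfs\theta}_n-{\bfs\theta}_0\rVert\to 0$ almost surely, which is (\ref{eq:mdsem_parameter_consistency}).

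I expect no real obstacle; the only point needing care is what the theorem's proof actually supplies. If I wanted the corollary not to rest on the terse appeal to ``the standard consistency argument,'' I would supply the underlying argument directly. Fix $\varepsilon>0$ and let $K_\varepsilon=\{{\bfs\theta}\in\mathcal{H}:\lVert{\bfs\theta}-{\bfs\theta}_0\rVert\ge\varepsilon\}$. This set is compact, and $L$ is continuous and strictly positive on it by identifiability and the metric property of $d_{BW}$, so $\delta=\min_{K_\varepsilon}L>0$. On the probability-one event where $L(\hat{\bfs\theta}_n)\to0$, which the theorem provides, we eventually have $L(\hat{\bfs\theta}_n)<\delta$. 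Hence $\hat{\bfs\theta}_n\notin K_\varepsilon$ for all large $n$. Taking $\varepsilon=1/k$ and intersecting the countably many probability-one events yields almost sure convergence.
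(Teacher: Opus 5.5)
Your proof is correct and follows the paper's own argument: global identifiability gives $\mathcal{H}_0=\{{\bfs\theta}_0\}$, and the set-consistency result (\ref{eq:mdsem_set_consistency}) then reduces to $\lVert\hat{\bfs\theta}_n-{\bfs\theta}_0\rVert\to 0$ almost surely. Your supplementary compactness argument from $L(\hat{\bfs\theta}_n)\to 0$ is also valid, and the countable intersection over $\varepsilon=1/k$ is not even needed, since the single probability-one event $\{L(\hat{\bfs\theta}_n)\to 0\}$ already works for every $\varepsilon$ at once.
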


\begin{proof}
Global identifiability implies
$\mathcal{H}_0=\{{\bfs\theta}_0\}$.
The result therefore follows directly from
(\ref{eq:mdsem_set_consistency}).\qed
\end{proof}

To establish asymptotic normality, stronger local regularity conditions are required.
The principal matrix square-root map is smooth on the cone of positive definite matrices \citep{Freidlin1968}.
Consequently, if ${\bf \Sigma}({\bfs\theta})$ is twice continuously differentiable and remains positive definite in a neighborhood of ${\bfs\theta}_0$, the population criterion $L({\bfs\theta})$ is also twice continuously differentiable in that neighborhood; see Proposition 3.3 of \citet{terada2025}.

Like the preceding consistency result, the following result is an application of the standard asymptotic theory for minimum discrepancy function estimators \citep{shapiro1983} to the proposed MDSEM estimator.

\begin{thm}[Asymptotic normality of the proposed MDSEM estimator]
Assume the conditions of the preceding theorem and corollary.
Furthermore, assume that
\begin{enumerate}
    \item
    $\mathrm{E}(\lVert{\bf x}_i\rVert^4)<\infty$;
    \item
    ${\bfs\theta}_0$ is an interior point of $\mathcal{H}$;
    \item
    ${\bf \Sigma}({\bfs\theta})$ is twice continuously differentiable
    and positive definite in a neighborhood of ${\bfs\theta}_0$; and
    \item
    the Hessian matrix
    \begin{equation}
        {\bf H}_0
        =
        \left.
        \frac{\partial^2 L({\bfs\theta})}
        {\partial{\bfs\theta}\partial{\bfs\theta}^{\p}}
        \right|_{{\bfs\theta}={\bfs\theta}_0}
    \end{equation}
    is nonsingular.
\end{enumerate}
Then
\begin{equation}
    \sqrt{n}
    \left(
    \hat{\bfs\theta}_n-{\bfs\theta}_0
    \right)
    \xrightarrow{d}
    N({\bf 0},{\bf V}),
    \label{eq:mdsem_asymptotic_normality}
\end{equation}
where ${\bf V}$ is the asymptotic covariance matrix of the proposed
MDSEM estimator.
\end{thm}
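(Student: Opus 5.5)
The plan is a standard M-estimation (delta-method) argument applied to the estimator in (\ref{eq:mdsem_bw_estimator}). I view the criterion as a smooth function $F({\bf S},{\bfs\theta})=d^2_{BW}({\bf S},{\bf \Sigma}({\bfs\theta}))$ of both arguments, and use the following properties:
\begin{itemize}
\item $F(\cdot,\cdot)$ is twice continuously differentiable jointly in $({\bf S},{\bfs\theta})$ on a neighborhood of $({\bf \Sigma}_0,{\bfs\theta}_0)$. This holds because the principal square root is smooth on the positive definite cone, $({\bf P}^{1/2}{\bf Q}{\bf P}^{1/2})$ stays positive definite there, and ${\bf \Sigma}({\bfs\theta})$ is $C^2$ (assumption 3; compare Proposition 3.3 of \citet{terada2025}).
\item $F({\bf \Sigma}_0,{\bfs\theta})\ge 0$, with equality at ${\bfs\theta}_0$, so $\partial_{\bfs\theta}F({\bf \Sigma}_0,{\bfs\theta}_0)={\bf 0}$.
\end{itemize}

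\textbf{Step 1: a first-order condition.} By the preceding corollary, $\hat{\bfs\theta}_n\to{\bfs\theta}_0$ almost surely. Since ${\bfs\theta}_0$ is interior, $\hat{\bfs\theta}_n$ is eventually interior. Also ${\bf S}_{\bf X}$ is eventually positive definite, because ${\bf \Sigma}_0$ is. Hence, with probability tending to one,
$\partial_{\bfs\theta}F({\bf S}_{\bf X},\hat{\bfs\theta}_n)={\bf 0}$.

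\textbf{Step 2: expand the first-order condition.} A mean-value expansion around $({\bf \Sigma}_0,{\bfs\theta}_0)$, taken row by row, gives
\begin{equation*}
{\bf 0}=\partial_{\bfs\theta}F({\bf \Sigma}_0,{\bfs\theta}_0)+\bar{\bf H}_n(\hat{\bfs\theta}_n-{\bfs\theta}_0)+\bar{\bf G}_n\,\mathrm{vech}({\bf S}_{\bf X}-{\bf \Sigma}_0).
\end{equation*}
Here $\bar{\bf H}_n$ and $\bar{\bf G}_n$ are the mixed second derivatives $\partial^2_{{\bfs\theta}{\bfs\theta}'}F$ and $\partial^2_{{\bfs\theta}\,\mathrm{vech}({\bf S})'}F$, evaluated at intermediate points. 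By consistency and continuity of the second derivatives, these converge almost surely to ${\bf H}_0$ and to ${\bf G}_0=\partial^2 F/\partial{\bfs\theta}\partial\mathrm{vech}({\bf S})'$ at $({\bf \Sigma}_0,{\bfs\theta}_0)$.

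\textbf{Step 3: the central limit theorem for ${\bf S}_{\bf X}$.} Under the finite fourth moments assumption,
\begin{equation*}
\sqrt n\,\mathrm{vech}({\bf S}_{\bf X}-{\bf \Sigma}_0)\xrightarrow{d}N({\bf 0},{\bf \Gamma}),
\end{equation*}
where ${\bf \Gamma}$ is the covariance of $\mathrm{vech}({\bf x}_i{\bf x}_i')$. The centering by the sample mean contributes only an $O_p(n^{-1})$ term, so it does not affect this limit.

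\textbf{Step 4: conclude.} Since ${\bf H}_0$ is nonsingular, $\bar{\bf H}_n$ is invertible with probability tending to one, and
\begin{equation*}
\sqrt n(\hat{\bfs\theta}_n-{\bfs\theta}_0)=-\bar{\bf H}_n^{-1}\bar{\bf G}_n\sqrt n\,\mathrm{vech}({\bf S}_{\bf X}-{\bf \Sigma}_0).
\end{equation*}
Slutsky's lemma then gives (\ref{eq:mdsem_asymptotic_normality}) with the sandwich covariance
\begin{equation*}
{\bf V}={\bf H}_0^{-1}{\bf G}_0{\bf \Gamma}{\bf G}_0'{\bf H}_0^{-1}.
\end{equation*}
Equivalently, this is Shapiro's theorem for minimum discrepancy estimators with the Jacobian ${\bf \Delta}=\partial\mathrm{vech}\,{\bf \Sigma}/\partial{\bfs\theta}'$ and the weight ${\bf W}=\frac12\partial^2 d^2_{BW}/\partial\mathrm{vech}\,\partial\mathrm{vech}'$ at $({\bf \Sigma}_0,{\bf \Sigma}_0)$. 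In that form ${\bf H}_0=2{\bf \Delta}'{\bf W}{\bf \Delta}$ and ${\bf G}_0=-2{\bf \Delta}'{\bf W}$.

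\textbf{Main obstacle.} The delicate step is justifying joint $C^2$ smoothness of $d^2_{BW}$ near the diagonal $({\bf \Sigma}_0,{\bf \Sigma}_0)$. Shapiro's framework also needs the discrepancy's Hessian with respect to the second argument to be positive definite on the diagonal (a proper discrepancy function), which is the condition that yields ${\bf G}_0=-{\bf H}$-type relations.

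I would first verify smoothness via the Fréchet differentiability of ${\bf A}\mapsto{\bf A}^{1/2}$ on positive definite matrices, applied to ${\bf A}={\bf P}^{1/2}{\bf Q}{\bf P}^{1/2}$. I would then compute the second derivative at ${\bf P}={\bf Q}={\bf \Sigma}_0$. Writing ${\bf Q}={\bf \Sigma}_0+{\bf E}$, it should equal $\tr({\bf E}\,\mathcal L_{{\bf \Sigma}_0}({\bf E}))/2$ up to a constant, where $\mathcal L_{{\bf \Sigma}_0}$ is the inverse of the Lyapunov operator ${\bf Y}\mapsto{\bf \Sigma}_0{\bf Y}+{\bf Y}{\bf \Sigma}_0$. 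This form is positive definite, which gives ${\bf W}>0$; together with nonsingular ${\bf H}_0$ this suffices.
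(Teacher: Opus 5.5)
Your argument is correct and is essentially the paper's proof. The paper uses the same ingredients: almost sure consistency from the corollary, the CLT for $\operatorname{vech}({\bf S}_{\bf X})$, joint $C^2$ smoothness of the BW discrepancy near $({\bf \Sigma}_0,{\bfs\theta}_0)$, and nonsingular ${\bf H}_0$. It simply cites Shapiro's (1983) theorem instead of writing out the expansion, and your sandwich with ${\bf H}_0=2{\bf \Delta}^{\p}{\bf W}{\bf \Delta}$, ${\bf G}_0=-2{\bf \Delta}^{\p}{\bf W}$ reproduces exactly the paper's $\hat{\bf V}$.

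One small correction: ${\bf G}_0=-2{\bf \Delta}^{\p}{\bf W}$ follows from the gradient of $d^2_{BW}({\bf S},\cdot)$ vanishing identically on the diagonal ${\bf \Sigma}={\bf S}$, not from positive definiteness of ${\bf W}$. Your Lyapunov computation of ${\bf W}>0$ is correct and fills in a condition the paper leaves implicit when invoking Shapiro, but your direct argument does not need it because ${\bf H}_0$ is assumed nonsingular.
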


\begin{proof}
By the preceding corollary,
$\hat{\bfs\theta}_n\xrightarrow{\mathrm{a.s.}}{\bfs\theta}_0$.
Moreover, the finite fourth-moment condition implies
\begin{equation}
    \sqrt{n}
    \left\{
    \operatorname{vech}({\bf S}_{\bf X})
    -
    \operatorname{vech}({\bf \Sigma}_0)
    \right\}
    \xrightarrow{d}
    N({\bf 0},{\bf \Gamma}),
    \label{eq:sample_covariance_asymptotic_normality}
\end{equation}
where ${\bf \Gamma}$ is the asymptotic covariance matrix of
$\operatorname{vech}({\bf S}_{\bf X})$.

The smoothness of the squared BW discrepancy and of
${\bf \Sigma}({\bfs\theta})$ ensures that the discrepancy function is twice continuously differentiable in a neighborhood of $({\bf \Sigma}_0,{\bfs\theta}_0)$.
Together with the consistency of $\hat{\bfs\theta}_n$, the asymptotic
normality in
(\ref{eq:sample_covariance_asymptotic_normality}), and the
nonsingularity of ${\bf H}_0$, the standard asymptotic normality
theorem for minimum discrepancy function estimators \citep{shapiro1983} yields
(\ref{eq:mdsem_asymptotic_normality}).\qed
\end{proof}

Thus, representing the proposed MDSEM estimator as a minimum discrepancy function estimator provides both consistency and asymptotic normality under standard regularity conditions.
These results do not require a new general asymptotic theory; rather, they establish that the existing theory of minimum discrepancy estimation applies to the proposed MDSEM estimator through its BW representation.
The asymptotic covariance matrix ${\bf V}$ also provides the basis for calculating standard errors without relying on computationally intensive bootstrap procedures.
The estimation procedure and the calculation of ${\bf V}$ are described in the following subsections.

\subsection{Parameter Estimation}
\label{subsec:parameter_estimation}
As shown in (\ref{eq:mdsem_bw_estimator}), the proposed MDSEM estimator can be obtained by directly minimizing
\[
    d_{BW}^2
    \left(
    {\bf S}_{\bf X},
    {\bf \Sigma}({\bfs\theta})
    \right)
\]
with respect to the free parameter vector ${\bfs\theta}$ over $\mathcal{H}$.
This formulation avoids alternating optimization over the score matrix ${\bf Z}$ and the model parameters, as employed in the original MDSEM.
It also allows the estimator to be computed using a general-purpose numerical optimization algorithm.
In this study, we use the L-BFGS-B \citep{ByrdLuNocedalZhu1995} algorithm.

The nonnegativity of the uniqueness variances can be imposed through a square-root parameterization.
Let $\psi_j$ denote the uniqueness variance of the $j$th observed variable and introduce an unconstrained computational parameter $\kappa_j$ such that
\begin{equation}
    \psi_j=\kappa_j^2,
    \qquad j=1,\ldots,p.
\end{equation}
Accordingly,
\begin{equation}
    {\bf \Psi}
    =
    \diag
    \left(
    \kappa_1^2,\ldots,\kappa_p^2
    \right).
\end{equation}
Numerical optimization is performed with respect to $\kappa_1,\ldots,\kappa_p$, whereas the uniqueness variances $\psi_1,\ldots,\psi_p$ are treated as the model parameters for reporting and statistical inference.
This parameterization ensures nonnegative uniqueness estimates.
It can also be used in conventional covariance-based SEM and is not specific to MDSEM.

The gradient of the squared BW distance with respect to its second matrix argument is
\begin{equation}
    {\bf G}
    \left(
    {\bf S}_{\bf X},
    {\bf \Sigma}({\bfs\theta})
    \right)
    =
    {\bf I}_p
    -
    {\bf S}_{\bf X}^{1/2}
    \left\{
    {\bf S}_{\bf X}^{1/2}
    {\bf \Sigma}({\bfs\theta})
    {\bf S}_{\bf X}^{1/2}
    \right\}^{-1/2}
    {\bf S}_{\bf X}^{1/2},
    \label{eq:bw_gradient_sigma}
\end{equation}
provided that the matrices involved are positive definite \citep{BhatiaJainLim2019}.
Therefore, the derivative with respect to the $k$th free parameter is obtained by the chain rule as
\begin{equation}
    \frac{\partial L_n({\bfs\theta})}{\partial\theta_k}
    =
    \tr
    \left\{
    {\bf G}
    \left(
    {\bf S}_{\bf X},
    {\bf \Sigma}({\bfs\theta})
    \right)
    \frac{\partial{\bf \Sigma}({\bfs\theta})}
    {\partial\theta_k}
    \right\}.
    \label{eq:bw_gradient_theta}
\end{equation}
Because numerical optimization uses the computational parameter $\kappa_j$, defined by $\psi_j=\kappa_j^2$, instead of the uniqueness variance $\psi_j$, the gradient is transformed by the chain rule.
In particular,
\begin{equation}
    \frac{\partial L_n}{\partial\kappa_j}
    =
    2\kappa_j
    \frac{\partial L_n}{\partial\psi_j},
    \qquad j=1,\ldots,p.
\end{equation}
The resulting gradient in the computational parameterization is supplied to the L-BFGS-B algorithm.

The latent variables admit the usual change of scale.
For any positive diagonal matrix ${\bf D}$, define
\begin{eqnarray}
    {\bf \Lambda}_{s}
    =
    {\bf \Lambda}{\bf D}, \ 
    {\bf B}_{s}
    =
    {\bf D}{\bf B}{\bf D}^{-1}, \ 
    {\bf \Omega}_{s}
    =
    {\bf D}^{-1}{\bf \Omega}{\bf D}^{-1}.
    \label{eq:mdsem_scaling_transformation}
\end{eqnarray}
The corresponding factor covariance matrix is
\begin{equation}
    {\bf \Phi}_{s}
    =
    {\bf D}^{-1}{\bf \Phi}{\bf D}^{-1}.
\end{equation}
These transformations leave the model-implied covariance matrix unchanged because
\begin{equation}
    {\bf \Lambda}_{s}
    {\bf \Phi}_{s}
    {\bf \Lambda}_{s}^{\p}
    +
    {\bf \Psi}
    =
    {\bf \Lambda}{\bf \Phi}{\bf \Lambda}^{\p}
    +
    {\bf \Psi}.
\end{equation}

During estimation, the scale is determined by the normalization $\diag({\bf \Omega})={\bf I}_r$.
For reporting standardized solutions, we instead set
\begin{equation}
    {\bf D}
    =
    \diag({\bf \Phi})^{1/2},
\end{equation}
so that $\diag({\bf \Phi}_{s})={\bf I}_r$.
The transformations in (\ref{eq:mdsem_scaling_transformation}) are then applied to ${\bf \Lambda}$, ${\bf B}$, and ${\bf \Omega}$.
Thus, the estimation normalization and the subsequent standardization for reporting purposes are distinguished explicitly.

\subsection{Standard Errors}
\label{subsec:standard_errors}

The asymptotic normality established above provides a basis for calculating standard errors of the proposed MDSEM estimator.
For statistical inference, let ${\bfs\theta}$ denote the parameter vector containing the uniqueness variances $\psi_1,\ldots,\psi_p$, rather than the computational parameters $\kappa_1,\ldots,\kappa_p$ used in numerical optimization.

An estimator of the asymptotic covariance matrix ${\bf V}$ in (\ref{eq:mdsem_asymptotic_normality}) is
\begin{eqnarray}
    \hat{\bf V}
    &=&
    (\hat{\bf\Delta}^{\p}\hat{\bf W}\hat{\bf\Delta})^{-1}
    \hat{\bf\Delta}^{\p}\hat{\bf W}
    \hat{\bf\Gamma}
    \hat{\bf W}\hat{\bf\Delta}
    (\hat{\bf\Delta}^{\p}\hat{\bf W}\hat{\bf\Delta})^{-1}.
    \label{eq:mdsem_asymptotic_covariance}
\end{eqnarray}
Accordingly, the estimated standard error of $\hat{\theta}_{k}$ is
\begin{equation}
    \widehat{\operatorname{se}}(\hat{\theta}_{k})
    =
    \left(
    \frac{\hat V_{kk}}{n}
    \right)^{1/2}.
    \label{eq:mdsem_standard_error}
\end{equation}

To define the matrices in (\ref{eq:mdsem_asymptotic_covariance}), let
\begin{equation}
    {\bfs\xi}({\bfs\theta})
    =
    \operatorname{vech}
    \left\{
    {\bf\Sigma}({\bfs\theta})
    \right\}.
\end{equation}
The matrix $\hat{\bf\Delta}$ is the Jacobian of the model-implied covariance vector evaluated at $\hat{\bfs\theta}_n$:
\begin{equation}
    \hat{\bf\Delta}
    =
    \left.
    \frac{\partial{\bfs\xi}({\bfs\theta})}
    {\partial{\bfs\theta}^{\p}}
    \right|_{{\bfs\theta}=\hat{\bfs\theta}_n}.
    \label{eq:mdsem_delta_matrix}
\end{equation}

To define $\hat{\bf W}$, regard ${\bfs\xi}=\operatorname{vech}({\bf\Sigma})$ as an unrestricted covariance coordinate.
Then
\begin{equation}
    \hat{\bf W}
    =
    \left.
    \frac{1}{2}
    \frac{\partial^2
    d_{BW}^2({\bf S}_{\bf X},{\bf\Sigma})}
    {\partial{\bfs\xi}\partial{\bfs\xi}^{\p}}
    \right|_{{\bf\Sigma}={\bf S}_{\bf X}}.
    \label{eq:mdsem_weight_matrix}
\end{equation}

Finally, $\hat{\bf\Gamma}$ estimates the asymptotic covariance matrix of $\operatorname{vech}({\bf S}_{\bf X})$.
For centered observations ${\bf x}_1,\ldots,{\bf x}_n$, it is estimated by
\begin{eqnarray}
    \hat{\bf\Gamma}
    &=&
    \frac{1}{n}
    \sum_{i=1}^{n}
    \left\{
    \operatorname{vech}({\bf x}_i{\bf x}_i^{\p})
    -
    \operatorname{vech}({\bf S}_{\bf X})
    \right\}
    \left\{
    \operatorname{vech}({\bf x}_i{\bf x}_i^{\p})
    -
    \operatorname{vech}({\bf S}_{\bf X})
    \right\}^{\p}.
    \label{eq:mdsem_gamma_estimator}
\end{eqnarray}
The matrices $\hat{\bf\Delta}$ and $\hat{\bf W}$ can be evaluated numerically.
In the numerical studies and empirical applications below, numerical differentiation is used to calculate these matrices and the resulting standard errors and confidence intervals.

Although numerical optimization is performed using the computational parameters $\kappa_1,\ldots,\kappa_p$, defined by $\psi_j=\kappa_j^2$, statistical inference concerns the uniqueness variances $\psi_1,\ldots,\psi_p$.
Let ${\bfs\theta}_{\kappa}$ denote the computational version of ${\bfs\theta}$ in which the uniqueness variances $\psi_1,\ldots,\psi_p$ are replaced by $\kappa_1,\ldots,\kappa_p$.
The two parameterizations are related by
\begin{equation}
    {\bfs\theta}
    =
    {\bf g}({\bfs\theta}_{\kappa}),
\end{equation}
where ${\bf g}$ leaves all other parameters unchanged and maps $\kappa_j$ to $\psi_j=\kappa_j^2$.

After numerical optimization, the estimator is expressed in the inferential parameterization as
\begin{equation}
    \hat{\bfs\theta}_n
    =
    {\bf g}(\hat{\bfs\theta}_{\kappa,n}).
\end{equation}
The matrices $\hat{\bf\Delta}$ and $\hat{\bf V}$ in (\ref{eq:mdsem_delta_matrix}) and (\ref{eq:mdsem_asymptotic_covariance}) are then evaluated directly with respect to ${\bfs\theta}$.
Consequently, the entries of $\hat{\bf V}$ corresponding to $\psi_1,\ldots,\psi_p$ describe the asymptotic variances of the estimated uniqueness variances themselves, rather than those of the computational parameters $\kappa_1,\ldots,\kappa_p$.
Their estimated standard errors are therefore
\begin{equation}
    \widehat{\operatorname{se}}(\hat{\psi}_j)
    =
    \left(
    \frac{\hat V_{\psi_j\psi_j}}{n}
    \right)^{1/2},
    \qquad j=1,\ldots,p.
\end{equation}

The parameters are subsequently transformed to the reporting scale on which the factor variances equal one.
Because the scaling matrix
\begin{equation}
    {\bf D}
    =
    \diag\left(
    {\bf\Phi}
    \right)^{1/2}
\end{equation}
depends on the estimated parameters, this transformation cannot be treated as a fixed linear transformation when calculating standard errors.

Define the mapping from the inferential parameter vector ${\bfs\theta}$ to the parameters reported on the standardized scale by
\begin{equation}
    {\bf h}({\bfs\theta})
    =
    \left[
    \begin{array}{c}
    \operatorname{vec}_{\mathrm{free}}({\bf\Lambda}_{s})\\
    \operatorname{vec}_{\mathrm{free}}({\bf B}_{s})\\
    \operatorname{vech}_{\mathrm{free}}({\bf\Phi}_{s})
    \end{array}
    \right],
    \label{eq:mdsem_reporting_map}
\end{equation}
where ${\bf\Lambda}_{s}$, ${\bf B}_{s}$, and ${\bf\Phi}_{s}$ are obtained from ${\bfs\theta}$ using the scaling transformation in (\ref{eq:mdsem_scaling_transformation}) with ${\bf D}=\diag({\bf\Phi})^{1/2}$.
Here, $\operatorname{vec}_{\mathrm{free}}$ extracts the free elements of a matrix, and $\operatorname{vech}_{\mathrm{free}}$ extracts the nonfixed distinct elements of a symmetric matrix.

Let
\begin{equation}
    \hat{\bf J}_{h}
    =
    \left.
    \frac{\partial{\bf h}({\bfs\theta})}
    {\partial{\bfs\theta}^{\p}}
    \right|_{{\bfs\theta}=\hat{\bfs\theta}_n}
    \label{eq:mdsem_reporting_jacobian}
\end{equation}
denote the Jacobian of this mapping.
The delta method gives the estimated asymptotic covariance matrix of the standardized parameter estimates as
\begin{equation}
    \hat{\bf V}_{h}
    =
    \hat{\bf J}_{h}
    \hat{\bf V}
    \hat{\bf J}_{h}^{\p}.
    \label{eq:mdsem_standardized_covariance}
\end{equation}
Consequently, the estimated standard errors of the elements of ${\bf h}(\hat{\bfs\theta}_n)$ are
\begin{equation}
    \widehat{\operatorname{se}}
    \left\{
    h_k(\hat{\bfs\theta}_n)
    \right\}
    =
    \left(
    \frac{\hat V_{h,kk}}{n}
    \right)^{1/2}.
    \label{eq:mdsem_standardized_standard_errors}
\end{equation}
In the numerical studies and empirical applications, the Jacobian $\hat{\bf J}_{h}$ is evaluated by numerical differentiation.
Thus, the dependence of the scaling transformation on the estimated parameters is incorporated into the standard errors of the reported factor loadings, path coefficients, and factor correlations.

\section{Numerical Simulations}
\label{sec:numerical_simulations}

Three simulation studies examined the principal statistical and computational consequences of the BW representation.
Simulation 1 evaluated whether the resulting MDSEM estimator exhibits finite-sample behavior comparable to conventional covariance-based SEM estimators under correct specification.
Simulation 2 assessed the confidence intervals enabled by the minimum discrepancy function representation.
Simulation 3 examined whether direct minimization of the BW discrepancy provides stable estimation under small samples and model misspecification.

\subsection{Simulation 1: Finite-Sample Behavior under Correct
Specification}
\label{subsec:simulation1}

\subsubsection{Design}

The first simulation examined the finite-sample behavior corresponding to the consistency result established above.
The proposed MDSEM estimator was compared with conventional covariance-based SEM estimators as the sample size increased.

The population model contained $p=16$ observed variables and $r=4$ factors.
Each factor was measured by four observed variables, and the loading matrix ${\bf\Lambda}^{*}$ had a complete simple structure without cross-loadings.
The nonzero loadings were specified as
\begin{eqnarray}
    (\lambda_{1,1}^{*},\ldots,\lambda_{4,1}^{*})
    &=& (0.6,0.7,0.6,0.7), \nonumber\\
    (\lambda_{5,2}^{*},\ldots,\lambda_{8,2}^{*})
    &=& (-0.6,-0.7,-0.6,-0.7), \nonumber\\
    (\lambda_{9,3}^{*},\ldots,\lambda_{12,3}^{*})
    &=& (0.3,0.4,0.3,0.4), \nonumber\\
    (\lambda_{13,4}^{*},\ldots,\lambda_{16,4}^{*})
    &=& (0.6,0.7,0.6,0.7),
\end{eqnarray}
and all other elements were fixed to zero.

The nonzero elements of the path coefficient matrix ${\bf B}^{*}$ were
\begin{equation}
    b_{4,1}^{*}=0.4,\qquad
    b_{4,2}^{*}=-0.4,\qquad
    b_{4,3}^{*}=0.4.
\end{equation}
Thus, the fourth factor had structural effects on the other three factors.
The structural disturbance correlation matrix ${\bf\Omega}^{*}$ had unit diagonal elements, and its $(1,2)$, $(1,3)$, and $(2,3)$ correlations, together with their symmetric counterparts, were set to $0.4$.
All other off-diagonal elements were zero.

The factor covariance matrix was
\begin{equation}
    {\bf\Phi}^{*}
    =
    ({\bf I}_r-{\bf B}^{*})^{-1\p}
    {\bf\Omega}^{*}
    ({\bf I}_r-{\bf B}^{*})^{-1}.
\end{equation}
The population parameters were transformed according to (\ref{eq:mdsem_scaling_transformation}) so that $\diag({\bf\Phi}^{*})={\bf I}_r$.
The population covariance matrix was then defined as
\begin{equation}
    {\bf\Sigma}^{*}
    =
    {\bf\Lambda}^{*}
    ({\bf I}_r-{\bf B}^{*})^{-1\p}
    {\bf\Omega}^{*}
    ({\bf I}_r-{\bf B}^{*})^{-1}
    {\bf\Lambda}^{*\p}
    +
    {\bf\Psi}^{*},
    \label{eq:simulation_covariance}
\end{equation}
where the diagonal uniqueness matrix ${\bf\Psi}^{*}$ was chosen so that $\diag({\bf\Sigma}^{*})={\bf I}_p$.

Ten sample-size conditions,
$
    n=100,200,\ldots,1000,
$
were considered.
For each condition, $S=100$ data sets were generated from the population covariance matrix ${\bf\Sigma}^{*}$.
Five estimation procedures were applied to each data set: the proposed MDSEM estimator, maximum likelihood (ML), generalized least squares (GLS), ML with the uniqueness variances constrained to be positive, and GLS with the same positivity constraints.
The four covariance-based SEM procedures were implemented using the
\texttt{sem} function in the \texttt{lavaan} package.

For each parameter block ${\bf A}\in\{{\bf\Lambda},{\bf B},{\bf\Psi},{\bf\Phi}\}$, let $\mathcal{J}_{\bf A}$ denote the index set of its evaluated elements.
Let $a_j^{*}$ be the population value and $\hat a_{s,j}$ the estimate from replication $s$, and define
\begin{equation}
    \bar{\hat a}_{j}
    =
    \frac{1}{S}\sum_{s=1}^{S}\hat a_{s,j}.
\end{equation}
The average sampling variance for block ${\bf A}$ was calculated as
\begin{equation}
    \operatorname{Var}(\hat{\bf A})
    =
    \frac{1}{|\mathcal{J}_{\bf A}|}
    \sum_{j\in\mathcal{J}_{\bf A}}
    \left\{
    \frac{1}{S}
    \sum_{s=1}^{S}
    (\hat a_{s,j}-\bar{\hat a}_{j})^2
    \right\},
    \label{eq:variance_of_estimates}
\end{equation}
and the average squared bias was
\begin{equation}
    \operatorname{Bias}^{2}(\hat{\bf A})
    =
    \frac{1}{|\mathcal{J}_{\bf A}|}
    \sum_{j\in\mathcal{J}_{\bf A}}
    (\bar{\hat a}_{j}-a_j^{*})^2.
    \label{eq:squared_bias}
\end{equation}
Their sum gives the blockwise mean squared error, apart from the minor finite-replication adjustment associated with using $S$ rather than $S-1$ in the variance calculation.
The primary interest was whether the squared bias and variance of the MDSEM estimates decreased with increasing sample size in a manner comparable to those of ML and GLS.

\subsubsection{Results}

Figure~\ref{fig:simulation1} presents the average squared bias and sampling variance for each parameter block across the sample-size conditions.
The proposed MDSEM estimator exhibited small squared bias and sampling variance comparable to those obtained by ML and GLS.
For all procedures, sampling variance was substantially larger than squared bias and decreased rapidly as the sample size increased.

Thus, under correct model specification, the BW formulation yielded an MDSEM estimator with finite-sample behavior comparable to conventional covariance-based SEM estimators.
The theoretical connection established above was therefore also reflected in the empirical behavior of the parameter estimates.

\begin{figure}[!htb]
    \centering
    \includegraphics[width=0.8\linewidth]{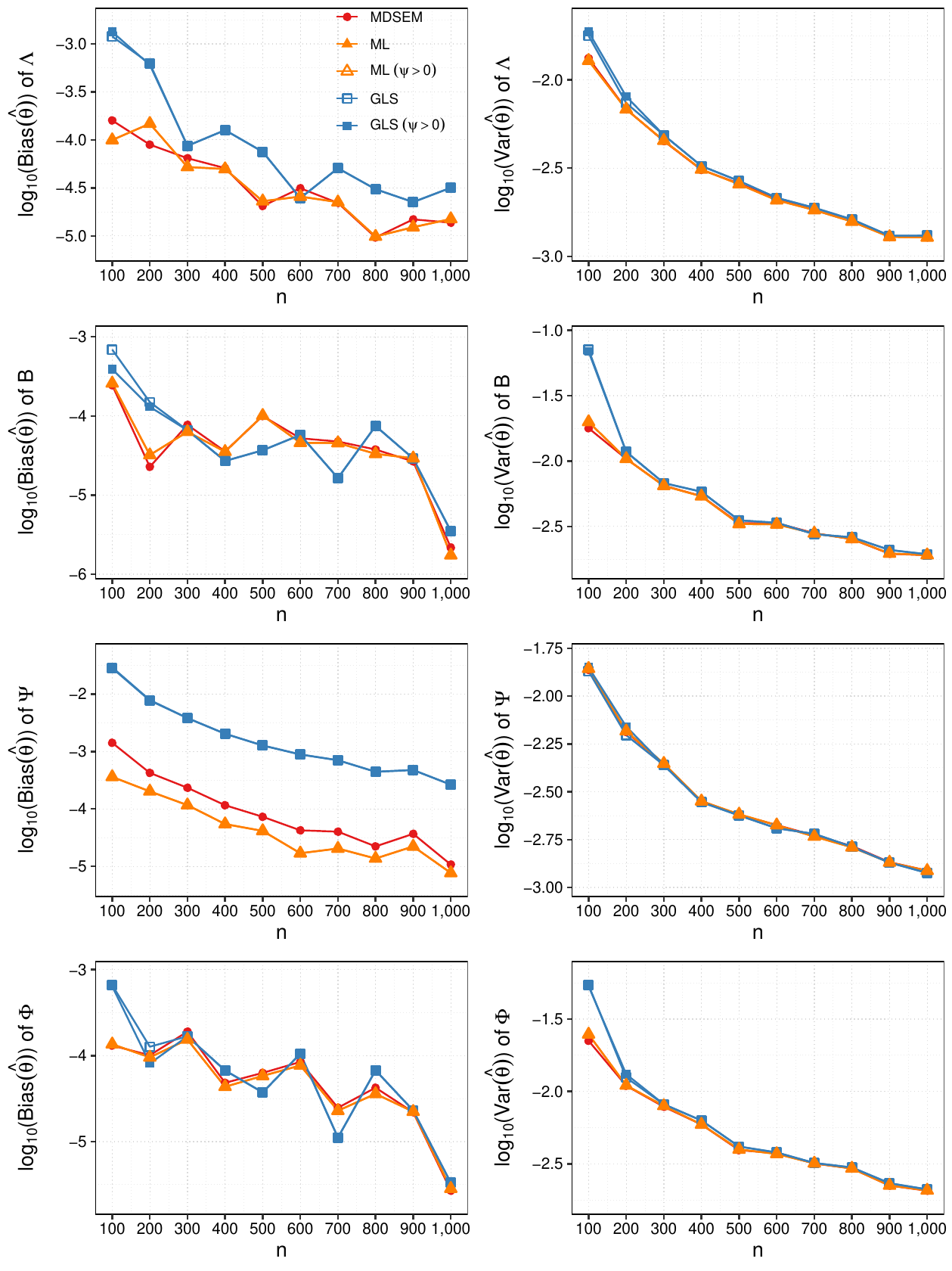}
    \caption{Average squared bias and sampling variance for each
    parameter block across sample-size conditions obtained by the five
    estimation procedures.}
    \label{fig:simulation1}
\end{figure}

\subsection{Simulation 2: Coverage of 95\% Confidence Intervals}
\label{subsec:simulation2}

\subsubsection{Design}

The second simulation evaluated the finite-sample accuracy of the standard errors derived from the asymptotic normality result.
Specifically, we examined whether the empirical coverage rates of nominal 95\% confidence intervals were close to 95\%.

The population model and parameter values were identical to those used in Simulation 1.
Four sample-size conditions,
$
    n=100,\ 300,\ 500,\ \text{and}\ 1000,
$
were considered.
For each condition, $S=1000$ data sets were generated under correct model specification.
The same five estimation procedures as in Simulation 1 were applied.

For each free parameter $\theta_j$, a nominal 95\% confidence interval was constructed as
\[
    \left[
    \hat{\theta}_j
    -
    1.96\,
    \widehat{\operatorname{se}}(\hat{\theta}_j),
    \quad
    \hat{\theta}_j
    +
    1.96\,
    \widehat{\operatorname{se}}(\hat{\theta}_j)
    \right].
\]
The empirical coverage rate was calculated as the proportion of replications in which this interval contained the corresponding population value $\theta_j^*$.
Coverage rates were then summarized separately for ${\bf\Lambda}$, ${\bf B}$, and ${\bf\Psi}$.

Coverage was evaluated only under correct model specification.
Under model misspecification, the data-generating parameters are not generally the asymptotic targets of the estimators.
Consequently, poor coverage could reflect misspecification bias rather than inaccurate standard errors, making it unsuitable as a direct assessment of the proposed inferential procedure.

\subsubsection{Results}

Figure~\ref{fig:simulation2} presents the empirical coverage rates of the nominal 95\% confidence intervals for each parameter block and sample-size condition.
For MDSEM, the coverage rates were generally close to the nominal level and did not exhibit systematic undercoverage.

These results indicate that the sandwich covariance estimator, together with the delta-method transformations, provides a reasonable finite-sample approximation to the sampling variability of the MDSEM estimates under correct model specification.
They also provide finite-sample support for the inferential framework obtained by representing MDSEM as a minimum discrepancy function estimator.

\begin{figure}[!htb]
    \centering
    \includegraphics[width=0.8\linewidth]{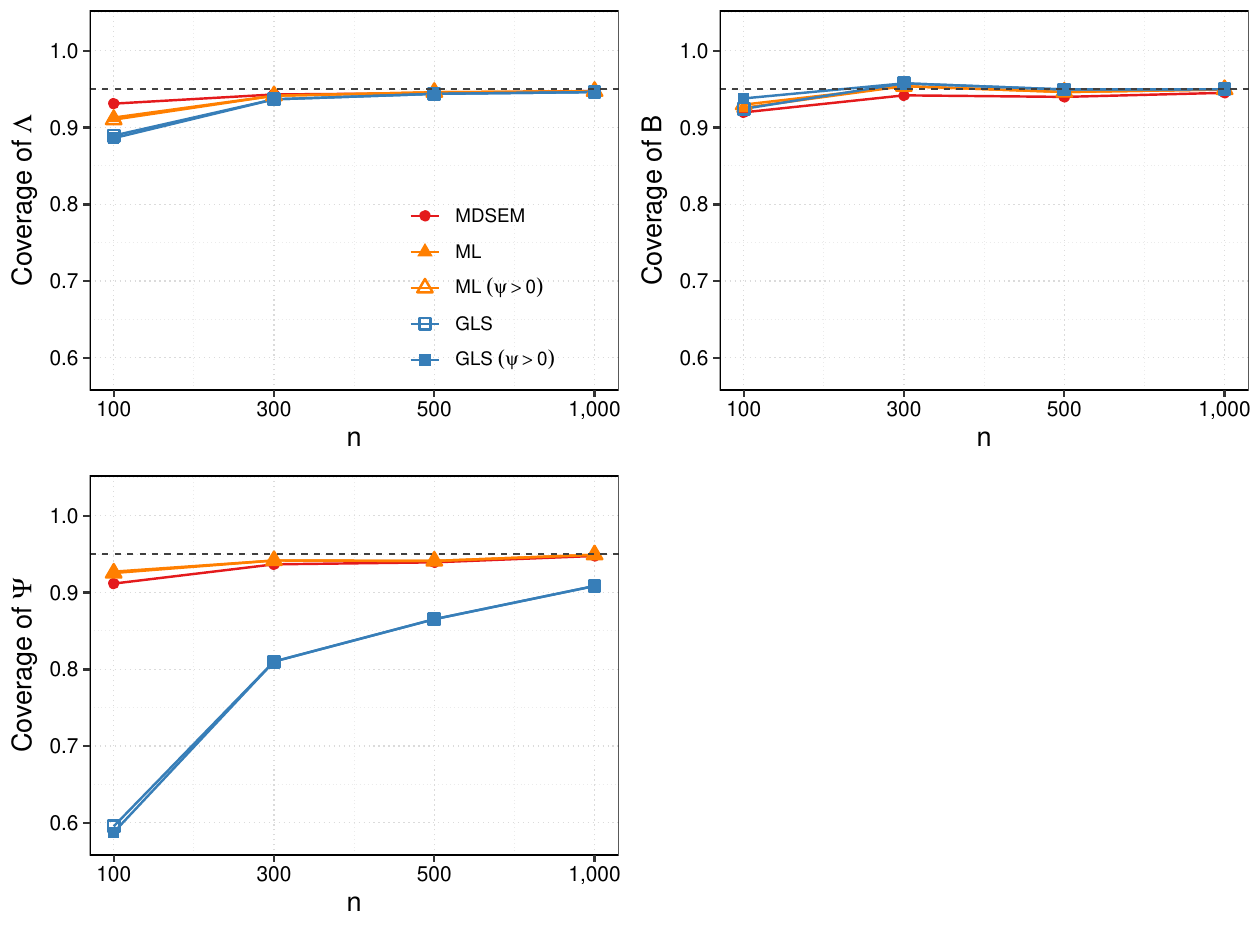}
    \caption{Empirical coverage rates of nominal 95\% confidence
    intervals for each parameter block across sample-size conditions
    obtained by the five estimation procedures.}
    \label{fig:simulation2}
\end{figure}

\subsection{Simulation 3: Performance under Small Samples and Model
Misspecification}
\label{subsec:simulation3}

\subsubsection{Design}

Model misspecification was introduced by adding cross-loadings, an omitted structural path, and residual covariances to the population model used in Simulation 1.
Specifically, cross-loadings were added for $(x_1,x_2)$ on Factor 2, $(x_5,x_6)$ on Factor 3, and $(x_9,x_{10})$ on Factor 4, with values $0.25$, $-0.25$, and $0.25$, respectively.
An additional structural path from Factor 2 to Factor 1 was set to $0.25$.
Residual covariances of $0.25$ were also added between three pairs of indicators, $(x_1,x_5)$, $(x_5,x_9)$, and $(x_9,x_{13})$.
These additional parameters were fixed to zero in the fitted model and therefore represented unmodeled measurement, structural, and residual relations.

The resulting population covariance matrix was constructed using (\ref{eq:simulation_covariance}) after incorporating these additional population parameters.
The population matrices were scaled so that $\diag({\bf\Sigma}^{*})={\bf I}_p$.

Eight sample-size conditions,
$
    n=30,\ 40,\ldots,\ 100,
$
were examined.
For each condition, $S=100$ data sets were generated, and the same five estimation procedures as in Simulations 1 and 2 were applied.

Numerical stability and estimation accuracy were evaluated using three criteria: the estimation failure rate, the 95th percentile of replication-wise root mean squared error (RMSE), and the condition number of the estimated Hessian matrix.
The 95th percentile of RMSE was used to assess upper-tail instability, that is, occasional large estimation errors that may not be apparent from mean RMSE.
The Hessian condition number was used as a diagnostic of local ill-conditioning and weak identification.
A large condition number indicates nearly flat or weakly identified parameter directions and may accompany convergence failures, improper solutions, or unstable standard-error estimates.

\subsubsection{Results}

Figure~\ref{fig:simulation3} presents the estimation failure rates and the 95th percentiles of RMSE for each parameter block.
Under small samples and model misspecification, ML and GLS frequently failed to converge or produced extreme estimation errors, whereas MDSEM converged in every replication and maintained comparatively stable upper-tail RMSE.
Figure~\ref{fig:simulation3_hessian} further shows that the Hessian condition numbers for MDSEM were substantially smaller than those for ML and GLS, especially at $n=30$.

These results indicate that direct optimization of the BW discrepancy can provide substantial numerical advantages under the challenging conditions considered here.
The main practical advantage of MDSEM in this setting is not uniformly superior average estimation accuracy, but resistance to severe numerical instability.

\begin{figure}[!htb]
    \centering
    \includegraphics[width=0.8\linewidth]{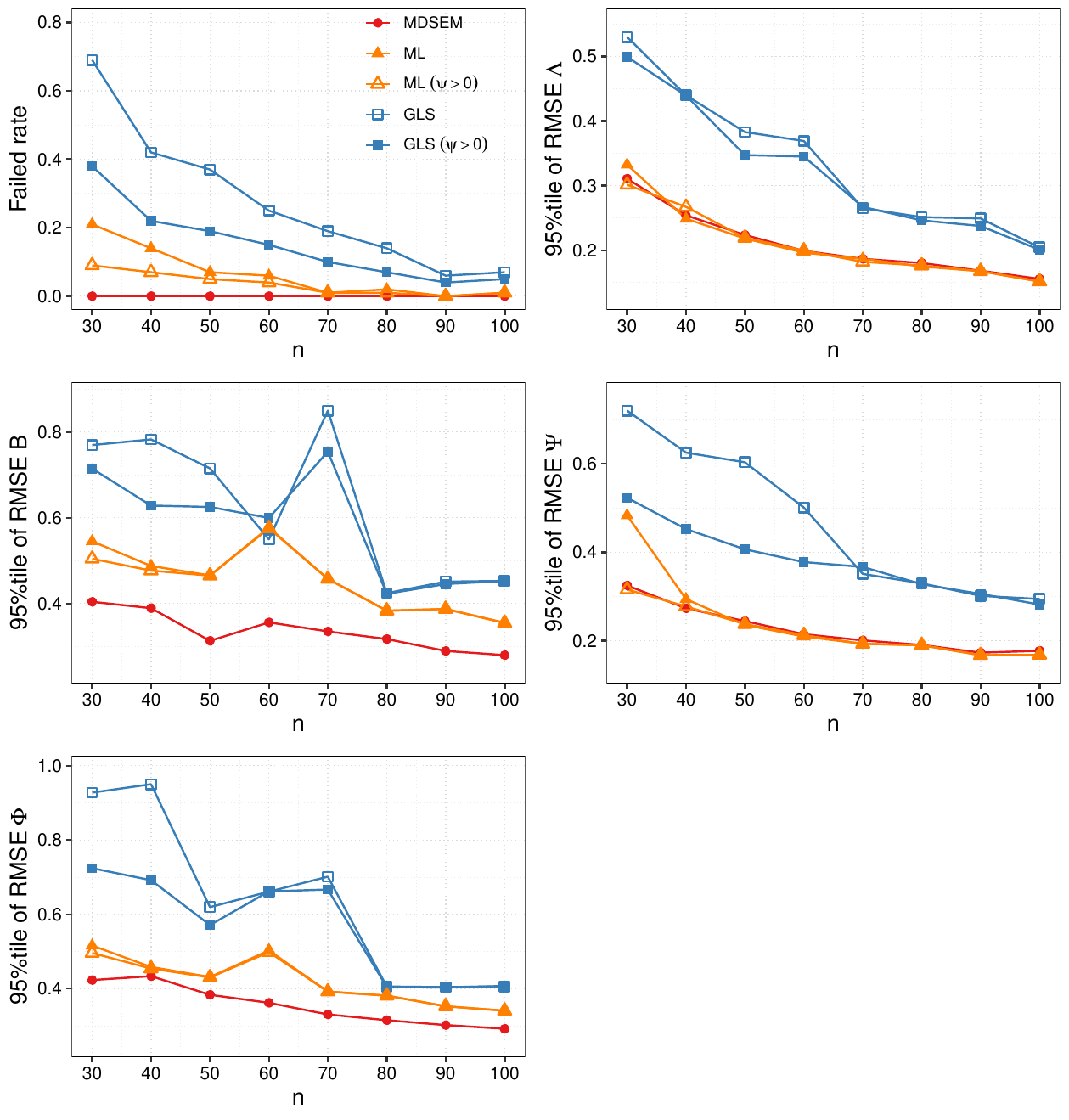}
    \caption{Estimation failure rates and 95th percentiles of RMSE for
    each parameter block across sample-size conditions obtained by the
    five estimation procedures.}
    \label{fig:simulation3}
\end{figure}

\begin{figure}[!htb]
    \centering
    \includegraphics[width=0.8\linewidth]
    {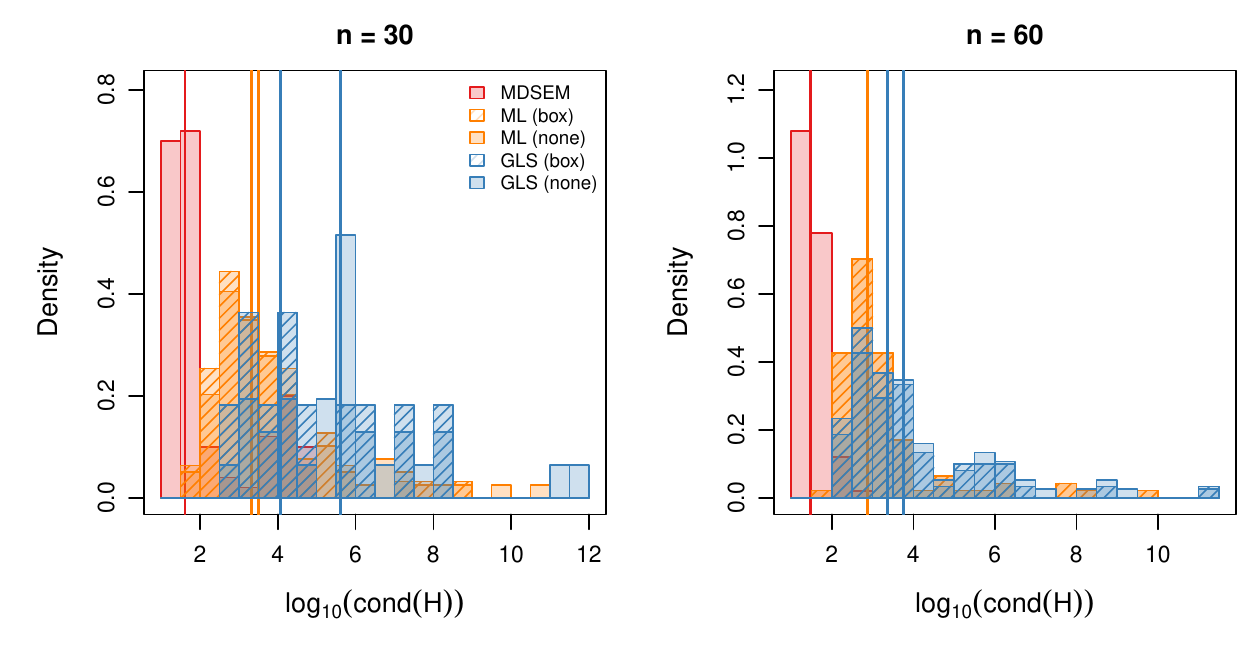}
    \caption{Distributions of Hessian condition numbers obtained by the
    five estimation procedures for $n=30$ and $n=60$. Vertical lines
    indicate the medians.}
    \label{fig:simulation3_hessian}
\end{figure}

\section{Empirical Comparison}
\label{sec:empirical_comparison}

As an empirical illustration, we analyzed the nine-variable intelligence-test data of Holzinger and Swineford (1939).
A three-factor SEM was fitted to the data using MDSEM, ML, and GLS.
The same model specification was used for all three estimation procedures.

Table~\ref{tab:example_param} presents the parameter estimates and standard errors.
Tables~\ref{tab:example_phi} and~\ref{tab:example_fit} present the estimated factor correlations and model-fit indices, respectively.
Overall, the estimates and standard errors obtained by MDSEM were close to those obtained by ML.
GLS produced somewhat different estimates for several parameters, but the general pattern of results was similar across the three procedures.
The model-fit indices obtained by MDSEM were also comparable to those obtained by ML.

Thus, in this empirical example, MDSEM produced results broadly consistent with those of conventional covariance-based SEM estimators.

\begin{table}[tb]
\centering
\caption{Factor loading and uniqueness estimates with standard errors, and path coefficient estimates with standard errors, obtained by MDSEM, ML, and GLS for the Holzinger--Swineford intelligence-test data.}
\label{tab:example_param}
\small
\begin{tabular}{llrrrrrrrrrrrr}
\toprule
& & \multicolumn{4}{c}{MDSEM}
& \multicolumn{4}{c}{ML}
& \multicolumn{4}{c}{GLS} \\
\cmidrule(lr){3-6}\cmidrule(lr){7-10}\cmidrule(lr){11-14}
& & \multicolumn{2}{c}{${\bf\Lambda}$}
& \multicolumn{2}{c}{${\bf\Psi}$}
& \multicolumn{2}{c}{${\bf\Lambda}$}
& \multicolumn{2}{c}{${\bf\Psi}$}
& \multicolumn{2}{c}{${\bf\Lambda}$}
& \multicolumn{2}{c}{${\bf\Psi}$} \\
\cmidrule(lr){3-4}\cmidrule(lr){5-6}
\cmidrule(lr){7-8}\cmidrule(lr){9-10}
\cmidrule(lr){11-12}\cmidrule(lr){13-14}
Variable & Factor
& est. & S.E. & est. & S.E.
& est. & S.E. & est. & S.E.
& est. & S.E. & est. & S.E. \\
\midrule
$x_1$ & $f_1$ & 0.79 & 0.08 & 0.37 & 0.10 & 0.77 & 0.07 & 0.40 & 0.08 & 0.67 & 0.07 & 0.40 & 0.07 \\
$x_2$ & $f_1$ & 0.42 & 0.06 & 0.81 & 0.08 & 0.42 & 0.07 & 0.82 & 0.07 & 0.32 & 0.07 & 0.76 & 0.07 \\
$x_3$ & $f_1$ & 0.58 & 0.06 & 0.65 & 0.07 & 0.58 & 0.07 & 0.66 & 0.07 & 0.49 & 0.07 & 0.59 & 0.06 \\
$x_4$ & $f_2$ & 0.85 & 0.05 & 0.27 & 0.04 & 0.85 & 0.05 & 0.27 & 0.04 & 0.83 & 0.05 & 0.25 & 0.03 \\
$x_5$ & $f_2$ & 0.85 & 0.04 & 0.27 & 0.04 & 0.85 & 0.05 & 0.27 & 0.04 & 0.83 & 0.05 & 0.23 & 0.03 \\
$x_6$ & $f_2$ & 0.84 & 0.05 & 0.29 & 0.04 & 0.84 & 0.05 & 0.30 & 0.04 & 0.83 & 0.05 & 0.30 & 0.04 \\
$x_7$ & $f_3$ & 0.53 & 0.06 & 0.69 & 0.07 & 0.57 & 0.06 & 0.67 & 0.07 & 0.59 & 0.06 & 0.50 & 0.06 \\
$x_8$ & $f_3$ & 0.68 & 0.06 & 0.52 & 0.07 & 0.72 & 0.07 & 0.48 & 0.07 & 0.70 & 0.06 & 0.47 & 0.06 \\
$x_9$ & $f_3$ & 0.73 & 0.06 & 0.45 & 0.07 & 0.66 & 0.06 & 0.56 & 0.07 & 0.71 & 0.06 & 0.45 & 0.06 \\
\midrule
\multicolumn{2}{l}{${\bf B}$} & est. & S.E. & & & est. & S.E. & & & est. & S.E. & &  \\
\midrule
$f_3 \sim f_1$ & & 0.44 & 0.09 & & & 0.43 & 0.09 & & & 0.60 & 0.12 & & \\
$f_3 \sim f_2$ & & 0.10 & 0.08 & & & 0.09 & 0.08 & & & 0.00 & 0.11 & & \\
\bottomrule
\end{tabular}
\end{table}

\begin{table}[tb]
\centering
\caption{Estimated factor correlations obtained by MDSEM, ML, and GLS for the Holzinger--Swineford intelligence-test data.}
\label{tab:example_phi}
\begin{tabular}{lrrr}
\toprule
Parameter & MDSEM & ML & GLS \\
\midrule
$f_1 \sim\sim f_2$ & 0.44 & 0.46 & 0.53 \\
$f_1 \sim\sim f_3$ & 0.48 & 0.47 & 0.60 \\
$f_2 \sim\sim f_3$ & 0.29 & 0.28 & 0.32 \\
\bottomrule
\end{tabular}
\end{table}

\begin{table}[tb]
\centering
\caption{Model-fit indices obtained by MDSEM, ML, and GLS for the Holzinger--Swineford intelligence-test data.}
\label{tab:example_fit}
\begin{tabular}{lrrr}
\toprule
Fit index & MDSEM & ML & GLS \\
\midrule
GFI   & 0.94 & 0.94 & 0.94 \\
AGFI  & 0.89 & 0.89 & 0.89 \\
SRMR  & 0.06 & 0.07 & 0.09 \\
RMSEA & 0.09 & 0.09 & 0.09 \\
\bottomrule
\end{tabular}
\end{table}

\section{Discussion}
\label{sec:discussion}
This study established an exact theoretical connection between the data-matrix formulation of MDSEM and the covariance-structure formulation of conventional SEM.
Specifically, by combining the measurement and structural equations and concentrating the resulting matrix-decomposition criterion over the auxiliary score matrix, we showed that this criterion is exactly equal to the squared BW distance between the sample covariance matrix and the model-implied covariance matrix.

This equivalence is the central theoretical contribution of the present study.
It shows that the MDSEM estimator can be interpreted as a minimum discrepancy function estimator within the conventional covariance-structure framework.
Consequently, standard minimum discrepancy theory yields consistency and asymptotic normality, as well as an asymptotic covariance matrix for calculating standard errors and confidence intervals.
The same representation also permits direct numerical minimization of the squared BW distance instead of alternating optimization over factor scores and model parameters.

The simulation studies supported the statistical and computational consequences of the BW representation.
Under correct model specification, the squared bias and sampling variance of the MDSEM estimates decreased as the sample size increased and were broadly comparable to those obtained by ML and GLS.
The empirical coverage rates of the nominal 95\% confidence intervals were also close to the target level, supporting the use of the sandwich covariance estimator and delta-method transformations for statistical inference.

More pronounced differences among the estimation procedures emerged under small samples and model misspecification.
ML and GLS frequently produced convergence failures or extreme estimation errors, whereas MDSEM converged in all examined replications and exhibited smaller upper-tail RMSE.
The Hessian condition numbers also tended to be substantially smaller for MDSEM.
These findings suggest that an important practical advantage of MDSEM is not uniformly superior average estimation accuracy, but resistance to severe numerical instability under unfavorable conditions.

A possible explanation for this stability is that the squared BW discrepancy does not directly involve either ${\bf\Sigma}^{-1}$ or ${\bf S}_{\bf X}^{-1}$.
Consequently, optimization based on this discrepancy may be less sensitive than ML or GLS to ill-conditioning in the model-implied covariance matrix or the sample covariance matrix.
This explanation is interpretive rather than a formal theoretical result, but it is consistent with the lower Hessian condition numbers, lower failure rates, and fewer extreme estimation errors observed for MDSEM in Simulation 3.

Several limitations should be noted.
First, the simulation studies considered a limited range of models, sample sizes, and forms of model misspecification; further research is needed to examine nonnormal data and larger, more complex models.
Second, the proposed asymptotic standard errors rely on regularity conditions, including local identification and a nonsingular Hessian, and may be less accurate under weak identification or near the boundary of the parameter space.

Further methodological development of MDSEM is required.
In particular, methods established in conventional SEM, including goodness-of-fit assessment, model comparison, the treatment of missing and ordinal data, and multiple-group analysis, have yet to be developed for MDSEM.
The BW formulation and asymptotic framework established in this study provide a basis for investigating these extensions.

\section*{Declaration}
\noindent
\textbf{Author Contributions}\ \ The author confirms sole responsibility for the following: study conception and mathematical development, data collection, analysis and interpretation of results, and manuscript preparation.\\
\noindent
\textbf{Funding}\ \ This research was supported by JSPS KAKENHI Grant Number 26K21186.\\
\noindent
\textbf{Conflict of Interest}\ \ The author has no conflicts of interest to disclose.\\
\noindent
\textbf{Data Availability}\ \ No empirical datasets were analyzed in this study. The simulated data were generated from the models described in the article.\\
\noindent
\textbf{Use of Generative Artificial Intelligence}\ \ The author used ChatGPT-5 to inspect spelling and grammatical errors during the writing process, but not to generate or revise any scientific content. The author independently verified the final text and remains fully responsible for the accuracy and integrity of the work.

\clearpage
\thispagestyle{empty}
\setcounter{equation}{0}
\renewcommand{\theequation}{S1.\arabic{equation}}


\begin{thebibliography}{99}

\bibitem[Adachi(2019)]{Adachi2019}
Adachi, K. (2019).
Factor analysis: Latent variable, matrix decomposition, and constrained uniqueness formulations.
{\it WIREs Computational Statistics}, 11(3), e1458.
doi: 10.1002/wics.1458.

\bibitem[Adachi and Trendafilov(2018)]{AdachiTrendafilov2018a}
Adachi, K. and Trendafilov, N. T. (2018).
Some mathematical properties of the matrix decomposition solution in factor analysis.
{\it Psychometrika}, 83(2), 407--424.
doi: 10.1007/s11336-017-9600-y.

\bibitem[Bartholomew et~al.(2011)]{bartholomew2011latent}
Bartholomew, D. J., Knott, M., and Moustaki, I. (2011).
{\it Latent Variable Models and Factor Analysis: A Unified Approach}.
John Wiley \& Sons.

\bibitem[Bentler(1980)]{bentler1980multivariate}
Bentler, P. M. (1980).
Multivariate analysis with latent variables: Causal modeling.
{\it Annual Review of Psychology}, 31(1), 419--456.

\bibitem[Bentler(1986)]{bentler1986structural}
Bentler, P. M. (1986).
Structural modeling and Psychometrika: An historical perspective on growth and achievements.
{\it Psychometrika}, 51(1), 35--51.

\bibitem[Bhatia et~al.(2019)]{BhatiaJainLim2019}
Bhatia, R., Jain, T., and Lim, Y. (2019).
On the Bures-Wasserstein distance between positive definite matrices.
{\it Expositiones Mathematicae}, 37(2), 165--191.
doi: 10.1016/j.exmath.2018.01.002.

\bibitem[Bollen(1989)]{Bollen1989}
Bollen, K. A. (1989).
{\it Structural Equations with Latent Variables}.
John Wiley \& Sons, New York.

\bibitem[Byrd et al.(1995)]{ByrdLuNocedalZhu1995}
Byrd, R. H., Lu, P., Nocedal, J., and Zhu, C. (1995).
A limited memory algorithm for bound constrained optimization.
\textit{SIAM Journal on Scientific Computing}, 16(5), 1190--1208.

\bibitem[Choi et~al.(2010)]{choi2010penalized}
Choi, J., Zou, H., and Oehlert, G. (2010).
A penalized maximum likelihood approach to sparse factor analysis.
{\it Statistics and Its Interface}, 3(4), 429--436.
doi: 10.4310/SII.2010.v3.n4.a1.

\bibitem[Cudeck and MacCallum(2007)]{cudeck2007factor}
Cudeck, R. and MacCallum, R. C. (Eds.) (2007).
{\it Factor Analysis at 100: Historical Developments and Future Directions}.
Routledge.

\bibitem[Freidlin(1968)]{Freidlin1968}
Freidlin, M. I. (1968).
On the factorization of non-negative definite matrices.
{\it Theory of Probability and Its Applications}, 13, 354--356.

\bibitem[Harman(1976)]{harman1976modern}
Harman, H. H. (1976).
{\it Modern Factor Analysis}.
University of Chicago Press.

\bibitem[Hirose and Yamamoto(2015)]{hirose2015sparse}
Hirose, K. and Yamamoto, M. (2015).
Sparse estimation via nonconcave penalized likelihood in factor analysis model.
{\it Statistics and Computing}, 25(5), 863--875.
doi: 10.1007/s11222-014-9458-0.

\bibitem[J\"oreskog(1969)]{joreskog1969}
J\"oreskog, K. G. (1969).
A general approach to confirmatory maximum likelihood factor analysis.
\textit{Psychometrika}, \textit{34}(2), 183--202.
doi: 10.1007/BF02289343

\bibitem[J\"oreskog(1970)]{joreskog1970}
J\"oreskog, K. G. (1970).
A general method for analysis of covariance structures.
\textit{Biometrika}, \textit{57}(2), 239--251.
doi: 10.1093/biomet/57.2.239

\bibitem[Kline(2023)]{kline2023}
Kline, R. B. (2023). 
{\it Principles and practice of structural equation modeling.}
Guilford publications.

\bibitem[Mulaik(2009)]{mulaik2009foundations}
Mulaik, S. A. (2009).
{\it Foundations of Factor Analysis}.
CRC Press.

\bibitem[Shapiro(1983)]{shapiro1983}
Shapiro, A. (1983).
Asymptotic distribution theory in the analysis of covariance structures:
A unified approach.
\textit{South African Statistical Journal}, \textit{17}, 33--81.

\bibitem[Shapiro(1984)]{shapiro1984}
Shapiro, A. (1984). 
A note on the consistency of estimators in the analysis of moment structures. 
\textit{British Journal of Mathematical and Statistical Psychology}, 37(1), 84–88.

\bibitem[Shapiro(1985a)]{shapiro1985a}
Shapiro, A. (1985a). 
Asymptotic distribution of test statistics in the analysis of moment structures under inequality constraints.
\textit{Biometrika}, 72(1), 133–144.

\bibitem[Shapiro(1985b)]{shapiro1985b}
Shapiro, A. (1985b). Asymptotic equivalence of minimum discrepancy function estimators to G.L.S. estimators.
\textit{South African Statistical Journal}, 19, 73–81.

\bibitem[So\v{c}an(2003)]{socan2003}
So\v{c}an, G. (2003).
The incremental value of minimum rank factor analysis.
Ph.D. thesis, University of Groningen.

\bibitem[Stegeman(2016)]{stegeman2016}
Stegeman, A. (2016).
A new method for simultaneous estimation of the factor model parameters, factor scores, and unique parts.
{\it Computational Statistics and Data Analysis}, 99, 189--203.

\bibitem[Terada(2025)]{terada2025}
Terada, Y. (2025).
Statistical properties of matrix decomposition factor analysis.
{\it arXiv preprint}, arXiv:2403.06968.

\bibitem[Yamashita(2024)]{Yamashita2024}
Yamashita, N. (2024).
Matrix decomposition approach for structural equation modeling as an alternative to covariance structure analysis and its theoretical properties.
{\it Structural Equation Modeling: A Multidisciplinary Journal}, 31(5), 817--834.
doi: 10.1080/10705511.2024.2342381.

\end{thebibliography}
\end{document}